\documentclass{article}

\usepackage{soul}
\usepackage{url}
\usepackage{graphicx}
\expandafter\let\csname equation*\endcsname\relax
\expandafter\let\csname endequation*\endcsname\relax
\usepackage{booktabs}
\urlstyle{same}

\usepackage{amsthm}
\usepackage{amssymb}
\usepackage{amsmath}
\allowdisplaybreaks[4]
\usepackage{physics}
\usepackage{thmtools}
\usepackage{thm-restate}
\usepackage{threeparttable}
\usepackage{multirow}
\usepackage{makecell}
\usepackage{qcircuit}

\usepackage{framed}
\usepackage{bbm}

\usepackage{tikz}
\usetikzlibrary{positioning, shapes.geometric}

\usepackage{comment}
\usepackage{ulem}
\usepackage{hyperref}

\usepackage[accepted]{icml2024}
\usepackage[capitalize,noabbrev]{cleveref}

\theoremstyle{plain}
\newtheorem{mydef}{Definition}

\icmltitlerunning{Quantum Algorithm for Online Exp-concave Optimization}
\begin{document}

 \twocolumn[
 \icmltitle{Quantum Algorithm for Online Exp-concave Optimization}


\begin{icmlauthorlist}
\icmlauthor{Jianhao He}{yyy}
\icmlauthor{Chengchang Liu}{yyy}
\icmlauthor{Xutong Liu}{yyy}
\icmlauthor{Lvzhou Li}{xxx}
\icmlauthor{John C.S. Lui}{yyy}
\end{icmlauthorlist}

\icmlaffiliation{yyy}{Department of Computer Science and Engineering, The Chinese University of Hong Kong, Hong Kong, China}
\icmlaffiliation{xxx}{School of Computer Science and Engineering, Sun Yat-sen University, Guangzhou, China}

\icmlcorrespondingauthor{Lvzhou Li}{lilvzh@mail.sysu.edu.cn}



\icmlkeywords{Online Exp-concave Optimization, Bandit Convex Optimization, Multi-point Bandit Feedback, Quantum Optimization Algorithms}

\vskip 0.3in
]



\printAffiliationsAndNotice{}  

\begin{abstract}
We explore whether quantum advantages can be found for the zeroth-order feedback online exp-concave optimization problem, which is also known as bandit exp-concave optimization with multi-point feedback. 
We present quantum online quasi-Newton methods to tackle the problem and show that there exists quantum advantages for such problems. 
Our method approximates the Hessian by quantum estimated inexact gradient and can achieve $O(n\log T)$ regret with $O(1)$ queries at each round, where $n$ is the dimension of the decision set and $T$ is the total decision rounds. Such regret improves the optimal classical algorithm by a factor of $T^{2/3}$.

\end{abstract}

\section{Introduction}
\label{secQONsetting}
In this paper, we consider the problem of online convex optimization for exp-concave loss functions, which is also known as online exp-concave optimization. Online convex optimization is an important framework in online learning, and particularly useful in sequential decision making problems, such as online routing \cite{awerbuch2008online}, portfolio selection \cite{Hazan2007}, and recommendation systems \cite{hazan2012projection}. 
Online convex optimization can be represented as a $T$ round iterative game between a player and an adversary.
At every round $t\in [T]$, the player generates a decision $x_t$ from a fixed and known convex set $\mathcal{K}\subseteq\mathbb{R}^n$, where $n$ is the dimension of the decision set.
The adversary observes $x_t$ and chooses a convex loss function $f_t:\mathcal{K}\to \mathbb{R}$, with which the player suffers a loss of $f_t(x_t)$.
Through this sequential process, some information about the loss function $f_t$ is revealed to the player as feedback. The goal of the player is to minimize his \textit{regret}, which is defined as
\[
    R(T) \triangleq \sum_{t=1}^T f_t(x_t)-\min_{x\in\mathcal{K}}\sum_{t=1}^T f_t(x).
\]

A good algorithm/strategy of the player should have a sublinear regret (that is, its regret is sublinear as a function of $T$) since this implies that as $T$ grows, the accumulated loss of the algorithm converges to that with the best fixed strategy in hindsight \cite{Hazan16,lattimore2020b}. 

We study the setting of exp-concave loss functions. 
This setting is also know as online exp-concave optimization (OXO). Note that OXO is useful in many machine learning problems, such as online supervised learning problems \cite{rakhlin2015online, bartlett2015minimax, gaillard2019uniform}, the portfolio selection problem \cite{luo2018efficient, mhammedi2022damped, jezequel2022efficient}.  
The loss functions are assumed to be exp-concave, which is defined as: 
\begin{mydef}[Exp-concave functions]
    A function $f:\mathcal{K}\to \mathbb{R}$ is $\alpha$-exp-concave over a convex set $\mathcal{K}$ if $g:\mathcal{K}\to \mathbb{R}, g=e^{-\alpha f(x)}$ is a concave function, where $\alpha \geq 0$.
\end{mydef}

Typical algorithms for this setting, such as the online Newton step proposed by Hazan \yrcite{Hazan2007}, can guarantee a regret bound of $O(n \log T)$ after $T$ rounds. 
But the online Newton step method requires full information feedback, which means that the loss function is revealed to the player as feedback in each round, so that the player can obtain an exact gradient to approximate the Newton's direction. 
However, in many applications, the loss function is not always easy to obtain.
When only partial information is revealed as feedback, the player has to make a prediction with limited information, which will often lead to worse regret.
Settings with such feedback are called zeroth-order feedback settings, and the feedback is modeled as a zeroth-order oracle of the loss function which is only accessible to the player after making his decision in each round. 
The zeroth-order feedback setting is well studied for the online convex optimization problem, as shown in the extended related works in Appendix \ref{Subsec:ERWOCO}.
But for online exp-concave optimization problems, it remains open. 
Liu et. al. \yrcite{liu2018online} proposed an online Newton step algorithm with estimated gradient, where the gradient is estimated by an unbiased estimator, and achieved an expected regret of $O\left(n T^{2/3} \log T\right)$.
However, there are still much room for improvement as the existing OXO work of zeroth-order feedback setting only achieves sub-optimal regret.

In the past decade, quantum computing techniques have been applied to accelerate many optimization problems, readers are referred to the extended related works in Appendix \ref{Subsec:ERWQO} for a briefly review. 
Online convex optimization problems can also be addressed using quantum computing.
In 2022, He et. al. \yrcite{he2022quantum} studied the quantum algorithm for online convex optimization of zeroth-order feedback setting, which achieves a regret bound of $O(\sqrt{T})$ that is independent of the dimension of the decision set. 
Under the same setting, this outperforms the optimal classical algorithms which have a square root dependence of the dimension.
Motivated by this, in this paper, we present quantum algorithms for the online exp-concave optimization problem with the multi-query bandit setting and we explore whether quantum advantages can be achieved. 

\paragraph{Problem Setting.} Here, an algorithm is allowed to query the zeroth-order oracle multiple times after committing a prediction so to receive feedback in each round.
A classical zeroth-order oracle $O_f$ to a loss function $f$, queried with a vector $x\in\mathcal{K}$, outputs $O_f(x)=f(x)$. 
A quantum zeroth-order oracle $Q_f$ is a unitary transformation that maps a quantum state $\ket{x}\ket{q}$ to the state $\ket{x}\ket{q+ f(x)}$, where $\ket{x}$, $\ket{q}$ and $\ket{q+ f(x)}$ are basis states corresponding to the floating-point representations of $x$, $q$ and $q+f(x)$ respectively. 
Moreover, given the superposition input $\sum_{x,q}\alpha_{x,q}\ket{x}\ket{q}$,  by linearity, the quantum oracle will output the state  $\sum_{x,q}\alpha_{x,q}\ket{x}\ket{q+ f(x)}$.
Note that we do not need to limit the power of the adversary, namely, the adversary in our setting can be \textit{completely adaptive}, and the adversary can choose $f_t$ after observing the player's choice $x_t$. 
We only assume that both the player and the adversary are quantum enabled, which means that the adversary returns a quantum oracle as feedback and the player can use a quantum computer and query the oracle with a superposition input.

In addition, we make the following assumptions which are common in online convex optimization.
We assume the loss functions are $\alpha$-exp-concave and
G-Lipschitz continuous, (i.e. $|f_t(x)-f_t(y)| \leqslant G \|y-x\|,\quad \forall x, y \in \mathcal{K}$). We also assume that the feasible set $\mathcal{K}$ is bounded and its diameter has an upper bound $D$, that is, $\forall{x,y \in \mathcal{K}}, \|x-y\|_2 \leq D$. $\mathcal{K},D,G$ are known to the player.
In addition, to avoid the mathematical proof in this paper being too lengthy and tedious, the loss functions are assumed to be $\beta$-smooth directly. 
The case where the smoothness assumption is not satisfied can be solved by using the mollification technique as shown in \cite{he2022quantum,Childs18}.
We will show how to extend our work to non-smooth case in Section \ref{lab:SecNonSmooth}.

\paragraph{Contribution.} The contribution of this paper is threefold.

\begin{itemize}
    \item 
    We propose a quantum randomized algorithm that can achieve the regret bound $O\left(\left(DG+\frac{1}{\alpha}\right)n\log(T+\log T)\right)$, by querying the oracle $O(1)$ times in each round (Theorem \ref{TNQ}), which means that our quantum algorithm outperforms the known optimal classical algorithm. Specifically, the regret of our quantum algorithm improves the optimal classical algorithm~\cite{liu2018online} by a factor of $T^{2/3}$, as shown in Table \ref{tab:RB}. 
    \item We extend the analysis of online Newton step method for the situation that the gradient is inexact but the error is controllable and can be by $\ell_1$ norm. This is the theoretical basis for selecting the parameters of quantum circuits and the learning rate appropriately. This technique also implies that there exists a classical algorithm with finite difference method can achieve a logarithm regret by querying the oracle $O(n)$ times in each round, as finite difference method can give an error controllable estimated gradient bounded by $\ell_1$ norm as well.
    \item We generalize our methods and propose quantum adaptive gradient methods and quantum online Newton methods with Hessian update respectively. The analysis of quantum adaptive gradient methods is closely related to online Newton methods. For the later extension, we show how to extend the quantum gradient estimation algorithm to the quantum Hessian estimation algorithm for the case that the quantum first-order oracle is accessible. Noticing that there is no `real' online Newton method yet, namely, there is no classical online algorithm which use the Hessian to design the update rule. Methods which approximate the Newton direction by the gradient information are generally called the quasi-Newton methods in offline optimization problems, but the word `quasi' is usually omitted in online optimization problems. To fill this void, we propose a framework of online Newton method with Hessian update, and show that it can achieve logarithm regret with appropriate parameters (Theorem \ref{TNHQ}). We also generalize our methods to the non-smooth case (Theorem \ref{TNQs}).
\end{itemize}

\begin{table*}[htbp]
    \centering
    \begin{threeparttable}
    \caption{Regret bound for online exp-concave optimization.}
    \begin{tabular}{l c c}
        \toprule
         Paper & Feedback & Regret \\
        \midrule
        \cite{Hazan2007} & Full information (first-order) & $O\left(DGn \log T\right)$ \\
        \cite{liu2018online} & Zeroth-order & $O\left(DGn T^{2/3} \log T\right)$ \\
        This work & Zeroth-order & $O\left(DGn \log (T+\log T)\right)$\\
        \bottomrule
    \end{tabular}
    \begin{tablenotes}
        \item[-] $D$ is the diameter upper bound of  the feasible set, $G$ is the Lipschitz constant, $n$ is the dimension of the feasible set, $T$ is the total decision rounds.
    \end{tablenotes}
    \label{tab:RB}
    \end{threeparttable}
\end{table*}

The remainder of this paper is organized as follows. Section \ref{lab:SecPre} introduces the basic concept of quantum computing and basic framework of online Newton step method of \cite{Hazan2007}. 
Section \ref{lab:SecQON} presents the quantum online quasi-Newton method, where the parameter choosing and the regret analysis of the algorithm are given in Section \ref{lab:SecRegret}. 
Section \ref{lab:SecExtension} gives the idea about extending our work to quantum adaptive gradient method. 
Section \ref{lab:SecON} gives the online Newton method with Hessian update and its regret analysis, accompanying with the quantum Hessian estimation algorithm.
Section \ref{lab:SecNonSmooth} shows how to extend our work to non-smooth loss functions.
Finally, we conclude with a discussion in Section \ref{lab:SecConclusion}.
The extended related works of online convex optimization and quantum optimization are placed in Appendix \ref{lab:SecERW}.
The proof details are given in Appendix \ref{lab:SecProof}.

\section{Preliminaries}
\label{lab:SecPre}
\subsection{Notations}
Quantum computing can be described with Dirac notations. We denote the computational basis of $\mathbb{C}^d$ as $\{\ket{i}\}_{i=0}^{d-1}$, where $\ket{i}$ is a $d$-dimensional vector with $1$ in the $i^{th}$ entry and $0$ in other entries. 
A $d$-dimensional quantum state can be described with an unit vector $\ket{v}=(v_1,v_2,\dots,v_d)^T=\sum_i v_i\ket{i} \in \mathbb{C}^d$ where $v_i$ is called the amplitude of $\ket{i}$ and $\sum_i \abs{v_i}^2=1$. 
The composite system of quantum systems can be described by tensor product of quantum states, that is, for $\ket{v}\in \mathbb{C}^{d_1},\ket{u} \in \mathbb{C}^{d_2}$, the composite state of this two states is $\ket{v} \otimes \ket{u} = (v_1u_1,v_1u_2,\dots.v_2u_1,\dots,v_{d_1}u_{d_2})\in \mathbb{C}^{d_1\times d_2}$. 
The notation $\otimes$ is often omitted when there is no ambiguity. The evolution of a closed quantum system can be described by unitary transformations. When we say `measure a quantum state', we mean measuring the state in computational basis unless otherwise stated. 
The measurement will give one of the state in the computational basis with the probability of the square of the magnitude of its amplitude. 
For example, if we measure $\ket{v}=\sum_i v_i\ket{i}$, we will get $i$ with probability $\abs{v_i}^2$, and the state will collapse in $\ket{i}$ after measurement, for all $i$.

We also introduce the basic notations for the optimization methods as follows.
The horizon, namely, the total decision rounds is denoted by $T$, and in order to distinguish it from the notation of matrix transpose, the notation of matrix transpose is denoted by Room letter $\mathrm{T}$.
The decision set is denoted by $\mathcal{K}\subseteq\mathbb{R}^n$, where $n$ is the dimension of the decision set and its diameter has an upper bound $D$.
The decision made by the player in round $t$ is denoted by $x_t$.
The loss function chose by the adversary in round $t$ is denoted by $f_t$, the corresponding gradient is denoted by $\nabla f_t$ and the Hessian is denoted by $H_t(f_t)$.
The Lipschitz parameters of the loss functions and its gradient functions are denoted by $G$ and $L$, respectively.
We use $\norm{\cdot}$ to present the Euclidean norm of vector, and given a positive definite matrix $A$, we use $\norm{x}_A \triangleq \sqrt{x^{\mathrm{T}} A x} $ to present the weighted norm of vector $x$.
The projection according to the weighted norm defined by the matrix $A$ is denoted as $\hat{P}^{(A_t)}_{\mathcal{K}}$:
\begin{align}
    \hat{P}^{(A_t)}_{\mathcal{K}}(y) \triangleq  \mathop{\arg\min_{x \in \mathcal{K}}} \|x - y\|_{A_t}.
\end{align}

\subsection{Basic Concepts and Frameworks}
In this subsection, we introduce basic concept of quantum circuit and the framework of classical online Newton methods.

Analogous to the way a classical computer is built from an electrical circuit containing wires and logic gates, a quantum computer is built from a quantum circuit containing wires and elementary quantum gates to carry around and manipulate the quantum information \cite{Nielsen2002}.
Thus, the quantum circuit is an efficient way to describe quantum algorithms.
The basic idea of a quantum circuit is as follows: horizontal lines are used to represent quantum bits (or quantum registers). 
In general, single lines for quantum bits (registers) and double lines for classical information. 
Symbols within a box are used to represent quantum gates and quantum measurements, with the position of the box representing the quantum bits to be executed. 
Time flows from left to right, and the quantum circuit is executed from left to right.
Note that any classical circuit can be replaced by an equivalent circuit containing only reversible elements, and thus can be simulated using a quantum circuit.

We present the online Newton methods~\cite{Hazan2007} in Algorithm \ref{ONS}, which is a basic framework for solving the online exp-concave optimization problem. 
It starts with a random decision, generates the decision of the next round by moving the current decision in the quasi-Newton direction of the current loss function and then projecting to the decision set, where the matrix $A_t$ in the update rule is updated by a rank-1 matrix generated by the exact gradient information.
With the exact gradient information, the online Newton step method can achieve logarithm regret.

\begin{algorithm}[H]
    \caption{Online Newton Step Method \cite{Hazan2007}}
    \label{ONS}
    \begin{algorithmic}[1]
        \REQUIRE Step size $\gamma$, parameter $\epsilon$.
        \ENSURE $x_1, x_2, x_3, \dots x_T$
      \STATE  Choose the initial point $x_1\in\mathcal{K}$ randomly, let $A_0=\epsilon I_n$.
        \FOR{$t=1$ to $T$}
            \STATE Play $x_t$, observe the loss function $f_t$ from the adversary.
            \STATE Update $x_{t+1}=\hat{P}^{(A_t)}_{\mathcal{K}}\left(x_t-\frac{1}{\gamma} A_t^{-1} \nabla f_t(x_t)\right)$, where $A_t=A_{t-1}+\nabla f_t(x_t)(\nabla f_t(x_t))^{\mathrm{T}}$.
        \ENDFOR
    \end{algorithmic}
\end{algorithm}

\section{Quantum Online Quasi-Newton Method}
\label{lab:SecQON}

In this section, for the OXO problem stated in Section \ref{secQONsetting}, given the total horizon $T$ and $\delta$,  we present Algorithm \ref{QONS} to produce a decision sequence $x_1, x_2, x_3, \dots, x_T$ for the player, such that with probability greater than $1-\delta$, it achieves a regret being logarithm of $T$. 
Specifically, $\delta$ is divided into a series of parameters $\rho_t$ which are intermediate parameters used to adjust the success probability of the quantum gradient estimation (Lemma \ref{QGB}).

Initially, the algorithm chooses $x_1$ randomly from $\mathcal{K}$, and then sequentially produces  $x_2, x_3, \dots, x_T$ by online Newton step method. Steps 4-11 are the process of quantum gradient estimation, where definitions of basic quantum gates/circuit is given in Appendix \ref{proof:QGE}, and the quantum circuit is illustrated in Figure \ref{QGEC}.
The quantum circuit of $Q_{F_t}$ in Step 6 is constructed by using $Q_{f_t}$ twice;
Since the evolution of a quantum operation is reversible, the $Q_F^{-1}$ can be operationally realized by reversing the inputs and outputs of $Q_F$. 
$\mathbbm{1}$ in Step 6 is the $n$-dimensional all 1's vector; 
the last register and the operation of addition modulo $2^c$ in Step 7 are used for implementing the common technique in quantum algorithm known as phase kickback which adds a phase shift related to the oracle; 
Step 8 is known as uncompute trick which recovers the ancillary registers to the initial states so that they can be reused in the next iterative.

For the update rule, if one needs to compute the invert of $A_t$ for each round $t$, it will incur a complexity of $O(n^3)$. But with the help of Sherman-Morrison-Woodbury Formula \yrcite{sherman1950adjustment,woodbury1950inverting}, 
\begin{align}
A_t^{-1} & =\left(A_{t-1}+\widetilde{\nabla} f_t(x_t)(\widetilde{\nabla} f_t(x_t))^T\right)^{-1} \nonumber \\
& = A_{t-1}^{-1} - A_{t-1}^{-1} \widetilde{\nabla} f_t(x_t) \nonumber \\ 
& \quad \times \left( 1 + (\widetilde{\nabla} f_t(x_t))^T A_{t-1}^{-1} \widetilde{\nabla} f_t(x_t)) \right)^{-1} \nonumber \\
& \quad \times  (\widetilde{\nabla} f_t(x_t))^T A_{t-1}^{-1},
\end{align}
the computation complexity can be reduced to $O(n^2)$. The initial $A_0$ is a identity matrix multiplied by a constant, thus $A_0^{-1}$ can be computed in $O(n)$.

\begin{algorithm}[H]
    \caption{Quantum Online Quasi-Newton Method (QONS)}
    \label{QONS}
    \begin{algorithmic}[1]
        \REQUIRE Step sizes $\{\eta_t\}$, parameters $\{r_t\}, \rho, \epsilon$.
        \ENSURE $x_1, x_2, x_3, \dots x_T$
      \STATE  Choose the initial point $x_1\in\mathcal{K}$ randomly, let $A_0=\epsilon I_n$.
        \FOR{$t=1$ to $T$}
            \STATE Play $x_t$, get the oracle of loss function $Q_{f_t}$ from the adversary.
            \STATE Prepare the initial state: $n$ $b$-qubit registers $\ket{0^{\otimes b},0^{\otimes b},\dots,0^{\otimes b}}$ where $b=\log_2 \cfrac{G\rho}{4\pi n^2 \beta r_t}$. 
            Prepare $1$ $c$-qubit register $\ket{0^{\otimes c}}$ where $c=\log_2{\cfrac{16\pi n}{\rho}}-1$.
            And prepare $\ket{y_0}=\cfrac{1}{\sqrt{2^n}}\sum_{a\in\{0,1,\dots,2^n-1\}}e^{\cfrac{2\pi i a}{2^n}}\ket{a}$.
            \STATE Apply Hadamard transform to the first $n$ registers.
            \STATE Perform the quantum query oracle $Q_{F_t}$ to the first $n+1$ registers, where  $F_t(u)=\cfrac{2^b}{2Gr_t} \left [f_t \left (x_t+\cfrac{r_t}{2^b} \left (u-\cfrac{2^b}{2}\mathbbm{1} \right ) \right )-f_t(x_t) \right ]$, and the result is stored in the $(n+1)$th register.
            \STATE Perform the addition modulo $2^c$ operation to the last two registers.
            \STATE Apply the inverse evaluating oracle $Q_{F_t}^{-1}$ to the first $n+1$ registers.
            \STATE Perform quantum inverse Fourier transformations to the first $n$ registers separately.
            \STATE Measure the first $n$ registers in computation bases respectively to get $m_1,m_2,\dots,m_n$.
            \STATE $\widetilde{\nabla} f_t(x_t)=\cfrac{2G}{2^b} \left (m_1-\cfrac{2^b}{2},m_2-\cfrac{2^b}{2},\dots, m_n-\cfrac{2^b}{2} \right )^{\mathrm{T}}$.
            \STATE Update $x_{t+1}=\hat{P}^{(A_t)}_{\mathcal{K}}\left(x_t-\frac{1}{\eta_t} A_t^{-1} \widetilde{\nabla} f_t(x_t)\right)$, where $A_t=A_{t-1}+\widetilde{\nabla} f_t(x_t)(\widetilde{\nabla} f_t(x_t))^{\mathrm{T}}$ .
            \STATE Bitwise erase the first $n$ registers with control-not gates controlled by the corresponding classical information of the measurement results $m_1,m_2,\dots,m_n$.
        \ENDFOR
    \end{algorithmic}
\end{algorithm}

We now analyze the query complexity of Algorithm \ref{QONS}. In each round, it needs to call the oracle twice to construct $Q_F$, and twice to perform the uncompute step $Q_F^{-1}$, so totally $4$ times for computing the gradient. Thus, $O(1)$ times for each round. Then, the error bound of the estimated gradient is shown in Lemma \ref{QGB}, and the proof is given in Appendix \ref{proof:QGE}.

\begin{restatable}{mylem}{QGB}
\label{QGB}
    In Algorithm \ref{QONS}, for all timestep $t$, if $f_t$ is $\beta$-smooth function,  then for any $r_t>0$ and $1\geq \rho_t >0$, the estimated gradient $\widetilde{\nabla} f_t(x_t)$ satisfies
    \begin{align}
         \Pr[\|{\nabla f_t(x_t)-\widetilde{\nabla} f_t(x_t)}\|_1 >8 \pi n^3 (n/\rho_t +1) \beta r_t/\rho_t] <  \rho_t. 
    \end{align}
\end{restatable}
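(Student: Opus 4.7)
The plan is to analyze the quantum gradient estimation subroutine in Steps 4--11 of Algorithm \ref{QONS}, which is a variant of Jordan's algorithm adapted to smooth functions, and then turn the resulting per-coordinate guarantee into an $\ell_1$-norm bound via a union argument.

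First, I would trace through the circuit to identify the pre-measurement state on the first $n$ registers. The Hadamard gates (Step 5) produce a uniform superposition over grid points $u\in\{0,1,\dots,2^b-1\}^n$. The sandwich $Q_{F_t}$--addition--$Q_{F_t}^{-1}$ (Steps 6--8), together with the eigenstate $\ket{y_0}$ of the modular-addition operator, realizes the phase oracle $\ket{u}\mapsto e^{2\pi i F_t(u)/2^c}\ket{u}$, while returning the ancillary $(n+1)$-th register to $\ket{0}$. Applying the inverse QFT on each of the first $n$ registers (Step 9) then transforms this state into one whose coordinatewise measurement outcomes are sharply concentrated around the scaled partial derivatives of $F_t$ at the center of the grid $u_0=(2^b/2)\mathbbm{1}$.

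Next, I would compare the actual state with the ``ideal'' state produced when $F_t$ is replaced by its linear approximation at $u_0$. Writing $v=(r_t/2^b)(u-u_0)$, so that $\|v\|_2\leq r_t\sqrt{n}/2$, and $F_t^{\mathrm{lin}}(u)=F_t(u_0)+\tfrac{1}{2G}(u-u_0)^{\mathrm{T}}\nabla f_t(x_t)$, the $\beta$-smoothness hypothesis together with Taylor's theorem gives
\begin{align}
    \abs{F_t(u)-F_t^{\mathrm{lin}}(u)} \leq \frac{2^b\beta}{4Gr_t}\|v\|^2 \leq \frac{2^b\beta r_t n}{16G}.
\end{align}
Multiplying by $2\pi/2^c$ yields a uniform phase-error bound, and hence an $\ell_2$-distance bound between the actual and ideal pre-measurement states.

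Because $F_t^{\mathrm{lin}}$ is separable across coordinates, the ideal state factorizes as a tensor product of single-register states of Jordan's form $\frac{1}{\sqrt{2^b}}\sum_{u_i}e^{2\pi i k_i u_i/2^b}\ket{u_i}$ with $k_i=\nabla_i f_t(x_t)\cdot 2^b/(2G\cdot 2^c)$. A standard QFT tail bound of Cleve--Ekert--Macchiavello--Mosca type gives a per-coordinate estimate of the form $\Pr[\abs{m_i-[k_i]}>h]=O(1/h)$, so choosing $h$ of order $n/\rho_t$ yields failure probability at most $\rho_t/n$ per coordinate. A union bound over the $n$ coordinates, combined with the triangle inequality that transfers the guarantee from the ideal to the actual state via the $\ell_2$-distance bound above, shows that every coordinate error $\abs{\widetilde{\nabla}_i f_t(x_t)-\nabla_i f_t(x_t)}$ is $O((2G/2^b)(n/\rho_t+1))$ simultaneously with probability at least $1-\rho_t$. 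Summing over $i$ and substituting the parameter choices $2^b=G\rho_t/(4\pi n^2\beta r_t)$ and $2^c=8\pi n/\rho_t$ from Step 4 gives exactly the claimed bound $8\pi n^3(n/\rho_t+1)\beta r_t/\rho_t$.

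The principal obstacle will be careful constant-chasing. The per-coordinate Jordan analysis contributes the $(n/\rho_t+1)$ factor, the $\ell_\infty$-to-$\ell_1$ conversion via the union bound contributes one factor of $n$, and the Taylor remainder contributes the $\beta r_t$ factor together with additional powers of $n$ arising from $\|v\|_2^2$. The scaling of $b$ in Step 4 is precisely what is needed to absorb the nonlinearity-induced state distance into the failure budget $\rho_t$, and it is this interplay between the smoothness error and the QFT tail that dictates the specific choices of $b$ and $c$.
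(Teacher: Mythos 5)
Your proposal follows essentially the same route as the paper's proof: trace the circuit to the phase-kickback state, compare it to the ideal linear-phase state using the $\beta$-smoothness Taylor remainder to bound the $\ell_2$ (hence trace) distance, apply the per-coordinate phase-estimation tail bound with threshold of order $n/\rho_t$, and conclude by a union bound over coordinates followed by the $\ell_\infty$-to-$\ell_1$ summation, exactly as in Appendix \ref{proof:QGE}. The only discrepancy is a bookkeeping convention, not a gap: the paper treats the ancilla content as a fixed-point fraction so that the kickback phase is $e^{2\pi i F_t(u)}$ (with an explicit $2^{-(c+1)}$ rounding term entering the state-distance bound), whereas your phase $e^{2\pi i F_t(u)/2^c}$ and the induced $k_i=\nabla_i f_t(x_t)\cdot 2^b/(2G\cdot 2^c)$ would be consistent with the recovery formula of Step 11 only if the register stores $2^c F_t(u)$.
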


The framework of quantum gradient estimator originates from Jordan's quantum gradient estimation method \cite{Jordan05}, but Jordan's algorithm did not give any error bound because the analysis of it was given by omitting the high-order terms of Taylor expansion of the function directly. 
In 2019, the quantum gradient estimation method with error analysis was given in \cite{Gilyen2019}, and was applied to the general convex optimization problem \cite{DeWolf18,Childs18}.
In those case, however, $O(\log{n})$ repetitions were needed to estimate the gradient within a acceptable error, which means that the query complexity to the zeroth-order oracle is $O(\log{n})$. 
We improve the analysis of the quantum gradient estimation method in our previous work \cite{he2022quantum}, and show that $O(1)$ queries is sufficient  in our problem, instead of $O(\log{n})$ repetitions. 
The key technique is, for each coordinate, the failure probability of a single repetition can be made small at the expense of a weaker quality of approximation, and then the worse approximation guarantee can be fixed by choosing a finer grid.

\section{Parameters Selection and Regret Analysis}
\label{lab:SecRegret}
In this section, we show how to choose appropriate parameters such that Algorithm \ref{QONS} guarantees $O\left(\left(DG+\frac{1}{\alpha}\right)n\log(T+\log T)\right)$ regret, which gives Theorem \ref{TNQ}. 
Firstly, in each round, we need to bound the difference between the loss suffered by the player and the function value of best fixed strategy $x^*$, but the latter one is unknown to us when analyzing the regret bound, so we set to prove a stronger property that holds for any points in the decision set as shown in Lemma \ref{QNSGB}. 
Then, let this `any point' be $x^*$ will give what we need. 
Secondly, summing up these inequalities for all $T$ round will give the regret bound, and we need to choose appropriate parameters to bound every term in $O(\log T)$ so to make the regret as small as possible.
Since the gradient estimator is not accurate, it leads to more challenging analysis.

\begin{restatable}{mylem}{QNSGB}
    \label{QNSGB}
    In Algorithm \ref{QONS}, for all timestep $t$, let $f_t:\mathcal{K} \to \mathbb{R}$ be $\alpha$-exp-concave functions with Lipschitz parameter $G$, where $\mathcal{K}$ is a convex set with diameter $D$, then for any $y\in \mathcal{K}$ and any $\eta_t\leq \min \left\{\frac{1}{8GD},\frac{\alpha}{2}\right\}$, the estimated gradient $\widetilde{\nabla} f_t(x_t)$ satisfies
    \begin{align}
        \label{QNSGB1}
        f_t(y)\geq & f_t(x_t)+\widetilde{\nabla} f_t(x_t)^{\mathrm{T}}(y-x_t) \nonumber \\ 
        &+\frac{\eta_t}{2}(y-x_t)^{\mathrm{T}} \widetilde{\nabla} f_t(x_t) \widetilde{\nabla} f_t(x_t)^{\mathrm{T}}(y-x_t) \nonumber \\
        &- \frac{9D}{8} \norm{\nabla f_t(x_t) - \widetilde{\nabla} f_t(x_t)}_1 \nonumber \\
        &- \frac{D}{16G} \norm{\nabla f_t(x_t) - \widetilde{\nabla} f_t(x_t)}^2_1.
    \end{align}
\end{restatable}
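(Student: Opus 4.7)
The plan is to reduce this lemma to the standard exp-concave gradient inequality for the exact gradient—namely, that under the stated hypotheses and for any $\eta_t \leq \min\{\frac{1}{8GD}, \frac{\alpha}{2}\}$,
\[
f_t(y) \geq f_t(x_t) + \nabla f_t(x_t)^{\mathrm{T}}(y-x_t) + \frac{\eta_t}{2}\bigl((y-x_t)^{\mathrm{T}}\nabla f_t(x_t)\bigr)^2,
\]
a classical result (Hazan \yrcite{Hazan2007}) that underlies the regret analysis of Algorithm \ref{ONS}. The hypotheses on $\eta_t$ in Lemma \ref{QNSGB} are exactly what is required to invoke this inequality, so it remains to translate the right-hand side from the true gradient to the quantum-estimated gradient $\widetilde{\nabla} f_t(x_t)$ while tracking the accumulated error.

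Concretely, I would set $\epsilon_t \triangleq \nabla f_t(x_t) - \widetilde{\nabla} f_t(x_t)$, $\Delta \triangleq y - x_t$, $a \triangleq \epsilon_t^{\mathrm{T}}\Delta$, and $b \triangleq \widetilde{\nabla} f_t(x_t)^{\mathrm{T}}\Delta$. Expanding $\nabla f_t(x_t)^{\mathrm{T}}\Delta = b+a$ and $(\nabla f_t(x_t)^{\mathrm{T}}\Delta)^2 = b^2 + 2ab + a^2$, substituting into the classical inequality, and isolating the two terms $\widetilde{\nabla} f_t(x_t)^{\mathrm{T}}\Delta + \frac{\eta_t}{2}\bigl(\widetilde{\nabla} f_t(x_t)^{\mathrm{T}}\Delta\bigr)^2$ that appear in \eqref{QNSGB1} leaves a remainder which it suffices to lower bound, namely
\[
a + \eta_t ab + \tfrac{\eta_t}{2}a^2.
\]

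To control this remainder, I would bound $|a| \leq \|\epsilon_t\|_1 \|\Delta\|_\infty \leq D\|\epsilon_t\|_1$ by Hölder's inequality, and bound the cross term via the triangle inequality $|b| \leq |\nabla f_t(x_t)^{\mathrm{T}}\Delta| + |a| \leq GD + |a|$ (Lipschitzness plus the previous bound). This gives $\eta_t ab \geq -\eta_t GD|a| - \eta_t a^2$, and combining with the retained $\tfrac{\eta_t}{2}a^2$ leaves a net $-\tfrac{\eta_t}{2}a^2$. Applying $\eta_t \leq \tfrac{1}{8GD}$ then converts $-\eta_t GD|a|$ into $-\tfrac{|a|}{8} \geq -\tfrac{D}{8}\|\epsilon_t\|_1$, which adds to the leading $-|a| \geq -D\|\epsilon_t\|_1$ to yield the coefficient $-\tfrac{9D}{8}$; the same step size bound controls the residual quadratic term via $-\tfrac{\eta_t}{2}a^2 \geq -\tfrac{D}{16G}\|\epsilon_t\|_1^2$.

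The main obstacle is the bookkeeping in the cross-term bound: one must split $|b|$ as $GD + |a|$ rather than the cruder $D(G + \|\epsilon_t\|_1)$, because only the former lets the nonnegative quadratic term on the right-hand side absorb exactly half of the $\eta_t a^2$ produced by the cross term, yielding the stated constant $\tfrac{1}{16G}$ rather than a worse $\tfrac{1}{8G}$. Apart from this balancing act and the routine $\ell_1/\ell_\infty$ duality, the argument is an essentially mechanical reduction to the classical online Newton step inequality.
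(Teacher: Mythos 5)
Your proof is correct and essentially coincides with the paper's: the paper likewise starts from the exact-gradient exp-concave inequality (which it re-derives from concavity of $e^{-2\eta f}$ together with $-\log(1-z)\geq z+\frac{z^2}{4}$ for $|z|\leq 1$), transfers it to $\widetilde{\nabla} f_t(x_t)$ by the same add-and-subtract decomposition, and uses H\"older duality ($|\epsilon_t^{\mathrm{T}}\Delta|\leq \|\epsilon_t\|_1\|\Delta\|_\infty\leq D\|\epsilon_t\|_1$), $\|\nabla f_t(x_t)\|_2\leq G$, and $\eta_t\leq\frac{1}{8GD}$ to reach the identical constants $\frac{9D}{8}$ and $\frac{D}{16G}$. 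Your scalarized remainder $a+\eta_t ab+\frac{\eta_t}{2}a^2$ is just a rank-one rewriting of the paper's matrix-form cross-term decomposition of $(y-x_t)^{\mathrm{T}}\nabla f_t(x_t)\nabla f_t(x_t)^{\mathrm{T}}(y-x_t)$ in Inequality (\ref{equ:tvggv}), so the two arguments agree in substance.
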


The proof is given in Appendix \ref{proof:QNSGB}. 
In the following, we show that with appropriate parameters, Algorithm \ref{QONS} can guarantee logarithm regret.

\begin{restatable}{mythe}{TNQ}
    \label{TNQ}
    Algorithm \ref{QONS} with parameters $\eta=\min\left\{\frac{1}{8GD},\frac{\alpha}{2}\right\}$, $\epsilon=\frac{1}{\eta^2 D^2}$, $\left\{r_t=\frac{\rho G}{\pi n^3 (n/\rho +1) \beta t}\right\}_{t=1}^T$, can achieve the regret bound $O\left(\left(DG+\frac{1}{\alpha}\right)n\log(T+\log T)\right)$ with probability greater than $1-T\rho$, and its query complexity is $O(1)$ in each round.
\end{restatable}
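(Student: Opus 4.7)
The plan is to combine the gradient-error control of Lemma \ref{QGB} with the per-round inequality of Lemma \ref{QNSGB}, and then close the regret sum via the standard online Newton step (ONS) telescoping argument, carefully adapted to accommodate the use of the quantum estimate $\widetilde{\nabla} f_t$ in the rank-one update of $A_t$. First I would instantiate Lemma \ref{QGB} with uniform failure probability $\rho_t = \rho$ and the specified schedule $r_t = \rho G/(\pi n^3(n/\rho+1)\beta t)$; a direct substitution collapses the error bound to $\|\nabla f_t(x_t) - \widetilde{\nabla} f_t(x_t)\|_1 \le 8G/t$ with probability at least $1-\rho$ per round, so a union bound makes this hold simultaneously on all $T$ rounds with probability at least $1 - T\rho$. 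The remainder of the argument is carried out deterministically on this good event.

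Next, apply Lemma \ref{QNSGB} at $y = x^* \in \arg\min_{x \in \mathcal{K}} \sum_{t=1}^T f_t(x)$, use $(x^*-x_t)^{\mathrm{T}} \widetilde{\nabla} f_t \widetilde{\nabla} f_t^{\mathrm{T}}(x^*-x_t) = (\widetilde{\nabla} f_t^{\mathrm{T}}(x_t - x^*))^2$, and rearrange to
\begin{align*}
f_t(x_t) - f_t(x^*) \le{}& \widetilde{\nabla} f_t(x_t)^{\mathrm{T}}(x_t - x^*) \\
& - \tfrac{\eta}{2}\bigl(\widetilde{\nabla} f_t(x_t)^{\mathrm{T}}(x_t - x^*)\bigr)^2 + E_t,
\end{align*}
where $E_t := \tfrac{9D}{8}\|\nabla f_t - \widetilde{\nabla} f_t\|_1 + \tfrac{D}{16G}\|\nabla f_t - \widetilde{\nabla} f_t\|_1^2$. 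Plugging in the per-round bound $8G/t$ yields $\sum_{t=1}^T E_t = O(DG\log T)$, a sub-dominant term that will be absorbed by the final regret.

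The main technical step is a Newton-type telescoping of the remainder. Using the identity $\|x_t - x^*\|_{A_t}^2 - \|x_t - x^*\|_{A_{t-1}}^2 = (\widetilde{\nabla} f_t^{\mathrm{T}}(x_t - x^*))^2$ (which follows from $A_t - A_{t-1} = \widetilde{\nabla} f_t\widetilde{\nabla} f_t^{\mathrm{T}}$) and the non-expansiveness of $\hat{P}^{(A_t)}_{\mathcal{K}}$ in the $A_t$-norm, one derives
\[
\widetilde{\nabla} f_t^{\mathrm{T}}(x_t - x^*) - \tfrac{\eta}{2}\bigl(\widetilde{\nabla} f_t^{\mathrm{T}}(x_t - x^*)\bigr)^2 \le \tfrac{\eta}{2}\bigl(\|x_t - x^*\|_{A_{t-1}}^2 - \|x_{t+1}-x^*\|_{A_t}^2\bigr) + \tfrac{1}{2\eta}\widetilde{\nabla} f_t^{\mathrm{T}} A_t^{-1}\widetilde{\nabla} f_t.
\]
Summing over $t$ telescopes the first piece to $(\eta/2)\|x_1 - x^*\|_{A_0}^2 \le (\eta/2)\epsilon D^2 = 1/(2\eta)$ once $\epsilon = 1/(\eta^2 D^2)$ is substituted. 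For the second piece, Step 11 of Algorithm \ref{QONS} ensures $\|\widetilde{\nabla} f_t\|_\infty \le G$ so $\|\widetilde{\nabla} f_t\|^2 \le n G^2$, and the matrix-determinant lemma combined with $1 - 1/x \le \log x$ gives $\sum_t \widetilde{\nabla} f_t^{\mathrm{T}} A_t^{-1}\widetilde{\nabla} f_t \le \log(\det A_T/\det A_0) = O(n\log(T+\log T))$, where the extra $\log T$ inside accommodates the residual contribution of the estimator error in the trace of $A_T$.

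Assembling the pieces, $1/(2\eta) = O(DG + 1/\alpha)$ from $\eta = \min\{1/(8GD),\alpha/2\}$, and dividing the determinant bound by $2\eta$ produces the dominating term $O((DG + 1/\alpha)n\log(T + \log T))$, which absorbs $\sum_t E_t = O(DG\log T)$. The main obstacle is making the telescoping close against the inexact direction: the schedule $r_t \propto 1/t$ must be tight enough that $\sum_t E_t$ stays $O(\log T)$ rather than blowing up to $O(T)$, and the curvature term of Lemma \ref{QNSGB} must be stated in $\widetilde{\nabla} f_t$ (not $\nabla f_t$) precisely so that it cancels the rank-one increment $(\widetilde{\nabla} f_t^{\mathrm{T}}(x_t - x^*))^2$ produced by updating $A_t$ with the same estimator—an observation that is the whole point of Lemma \ref{QNSGB}.
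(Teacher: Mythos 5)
Your proposal matches the paper's proof essentially step for step: Lemma \ref{QGB} with the stated $r_t$ substituted gives the per-round $\ell_1$ error $8G/t$, a union bound yields the $1-T\rho$ good event, Lemma \ref{QNSGB} at $y=x^*$ supplies the curvature term in $\widetilde{\nabla} f_t$ that cancels the rank-one increment, the Pythagorean/telescoping argument produces $\frac{\eta}{2}\epsilon D^2 = \frac{1}{2\eta}$ under $\epsilon=1/(\eta^2D^2)$, and the log-determinant potential gives the dominant $\frac{n}{2\eta}\log\left(1+\frac{5}{4}\log T + T\right)$ term, exactly as in Appendix \ref{proof:TNQ}. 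The one small discrepancy is in bounding $\|\widetilde{\nabla} f_t\|_2$ for the determinant step: your deterministic bound $\|\widetilde{\nabla} f_t\|_\infty \le G$ gives $\|\widetilde{\nabla} f_t\|_2^2\le nG^2$ and hence an extra factor $n$ inside the logarithm (i.e.\ $O(n\log(nT))$), whereas the paper instead uses $\|\widetilde{\nabla} f_t\|_2 \le \|\widetilde{\nabla} f_t-\nabla f_t\|_1 + G \le 8G/t + G$ on the good event to obtain precisely the claimed $\log(T+\log T)$ --- the route you in fact allude to when you attribute the inner $\log T$ to the accumulated estimator error.
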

The proof is given in Appendix \ref{proof:TNQ}.
    Replacing $\{r_t\}$ into $b$, we have $b=\log_2 \frac{n(n+\rho)t}{4\rho}=O(\log(Tn/\delta))$, and $c=\log_2{\frac{16\pi n}{\rho }}-1 = O(\log(Tn/\delta))$, where $\delta=T\rho$ is the total failure probability we set for the algorithm. 
Thus, $O(n\log(Tn/\delta))$ qubits are needed totally.

\section{Extensions}
We present how to extend the quantum online quasi-Newton methods to quantum adaptive gradient method and quantum online Newton method in this section.
\subsection{Quantum Adaptive Gradient Method}
\label{lab:SecExtension}
The adaptive gradient method (AdaGrad) proposed by Duchi et. al. \yrcite{duchi2011adaptive} is widely applied for deep neural networks. It has similar structure as online Newton methods, hence we adapt it as an extension of this work. 
The adaptive gradient method has nearly the same update rule as online Newton methods, except that the matrix in quasi-Newton direction is replaced by $G_t=\left(A_t\right)^{1/2}$. 
Note that it is computationally impractical to compute the root of the the outer product matrix $G_t$, Duchi et. al. \yrcite{duchi2011adaptive} specialized $G_t$ to a diagonal matrix $\text{diag}\left(A_t\right)^{1/2}$.
Then, both the inverse and root of the diagonal matrix can be computed in linear time so that the AdaGrad method has the same computational complexity as the first order of gradient method, and thus can be regarded as a modification of gradient descent.
It can also be generalized to $G_t=\text{diag}\left(A_t\right)^{1-p}$ for $p\in [0.5,1]$, which includes both the AdaGrad method and gradient descent method.

For those cases where only the zeroth-order oracle is accessible, we can use the $O(1)$-query quantum gradient estimation method to accelerate the calculation and keep the regret bound inline with the full information setting, which gives the quantum adaptive gradient method as shown in Algorithm \ref{QAGM}.

\begin{algorithm}[H]
    \caption{Quantum Adaptive Gradient Method}
    \label{QAGM}
    \begin{algorithmic}[1]
        \REQUIRE Step sizes $\gamma$, parameters $p\in [0.5,1], \epsilon, \{r_t\}$.
        \ENSURE $x_1, x_2, x_3, \dots x_T$
      \STATE  Choose the initial point $x_1\in\mathcal{K}$ randomly, let $A_0=\epsilon I_n$.
        \FOR{$t=1$ to $T$}
            \STATE Play $x_t$,  get the oracle of loss function $Q_{f_t}$ from the adversary.
            \STATE Use the quantum gradient estimation circuit with the parameter $r_t$ to get the estimated gradient $\widetilde{\nabla} f_t(x_t)$.
            \STATE Let $A_t=A_{t-1}+\widetilde{\nabla} f_t(x_t)(\widetilde{\nabla} f_t(x_t))^{\mathrm{T}} $.
            \STATE Update $x_{t+1}=\hat{P}^{(G_t)}_{\mathcal{K}}\left(x_t-\frac{1}{\gamma} G_t^{-1} \widetilde{\nabla} f_t(x_t)\right)$, where $G_t=\text{diag}\left(A_t\right)^{1-p}$.
        \ENDFOR
    \end{algorithmic}
\end{algorithm}

Although the method is called the gradient method, its analysis is closely related to online Newton methods. 
With similar analysis, appropriate parameters can be chosen to guarantee the following regret:
\begin{align}
    & \sum_{t=1}^T \left( f_t(x_t)-f_t(x^*) \right) \nonumber \\ 
    \leq & \frac{1}{2p\gamma} \trace{\left(G_T^p\right)} + \frac{D_q^2\gamma}{2} \trace{\left(G_T^p\right)}^{(1-p)/p} \nonumber \\ 
    \leq & O\left(D_q\sqrt{T}\right),
\end{align}
where $D_q$ is the upper bound of the $\ell_q$-diameter of the decision set, and $q=2p/(2p-1)$.
Note that this algorithm does not require the assumption of exponential concavity.

\subsection{Quantum Online Newton Method with Hessian Update}
\label{lab:SecON}

Since the estimation of each row of a Hessian matrix can be seen as a gradient estimation process of one dimension of the gradient function, the quantum Hessian estimation algorithm with the first-order oracle can be designed using the same framework of the quantum gradient estimation algorithm with the zeroth-order oracle. 
It outputs one row of the Hessian matrix per execution, thus $O(n)$ repeats of it can output a error controllable estimated Hessian matrix.
Here, with the quantum first-order oracle, given the superposition input, one can receive a superposition state of corresponding gradients at different points.
Figure \ref{QHEC} gives the quantum circuit of estimating the $i$-th row of the Hessian matrix.
Running such circuit for all $i\in [n]$, we will have a estimation of the Hessian.

\begin{figure*}[ht]
\begin{align*}
    \Qcircuit @C=1.8em @R=1.3em {
        \lstick{} & \gate{H^{\otimes b}} & \multigate{9}{ \ Q_{\nabla F_t} \ } & \qw & \qw                         & \qw & \multigate{9}{ \ Q_{\nabla F_t}^{-1} \ } & \gate{QFT^{-1}} & \meter & \cw\\ & \vdots & & & & & & \vdots & \vdots
        \\
        \lstick{} & \gate{H^{\otimes b}} & \ghost{ \ Q_{\nabla F_t} \ } & \qw       & \qw                         & \qw & \ghost{ \ Q_{\nabla F_t}^{-1} \ }        & \gate{QFT^{-1}} & \meter & \cw \\
        \lstick{} & \gate{H^{\otimes b}} & \ghost{ \ Q_{\nabla F_t} \ } & \qw       & \qw                         & \qw & \ghost{ \ Q_{\nabla F_t}^{-1} \ }        & \gate{QFT^{-1}} & \meter & \cw \\
        \lstick{} & \gate{H^{\otimes b}} & \ghost{ \ Q_{\nabla F_t} \ } & \qw       & \qw                         & \qw & \ghost{ \ Q_{\nabla F_t}^{-1} \ }        & \gate{QFT^{-1}} & \meter & \cw
            \inputgroupv{1}{5}{.1em}{4.9em}{\ket{0^{n\otimes b}}} \\
        \lstick{\ket{0^{\otimes c}}_1} & \qw & \ghost{ \ Q_{\nabla F_t} \ } & \qw    & \qw & \qw & \ghost{ \ Q_{\nabla F_t}^{-1} \ }        & \qw  \\ & \vdots & & & & & & \vdots & 
        \\
        \lstick{\ket{0^{\otimes c}}_i} & \qw & \ghost{ \ Q_{\nabla F_t} \ } & \qswap    & \qw & \qswap & \ghost{ \ Q_{\nabla F_t}^{-1} \ }        & \qw  \\ & \vdots & & \qwx & & \qwx & & \vdots & 
        \\
        \lstick{\ket{0^{\otimes c}}_n} & \qw & \ghost{ \ Q_{\nabla F_t} \ } & \qswap  \qwx    & \multigate{1}{\ + (\mod 2^c) \ } & \qswap \qwx & \ghost{ \ Q_{\nabla F_t}^{-1} \ }        & \qw  \\
        \lstick{\ket{y_0}} & \qw  & \qw      & \qw                      & \ghost{\ + (\mod 2^c) \ }       & \qw                         & \qw  & \qw  
    }
\end{align*}
\caption{Quantum circuit of estimating the $i$-th row of the Hessian matrix.  \label{QHEC}}
\end{figure*}
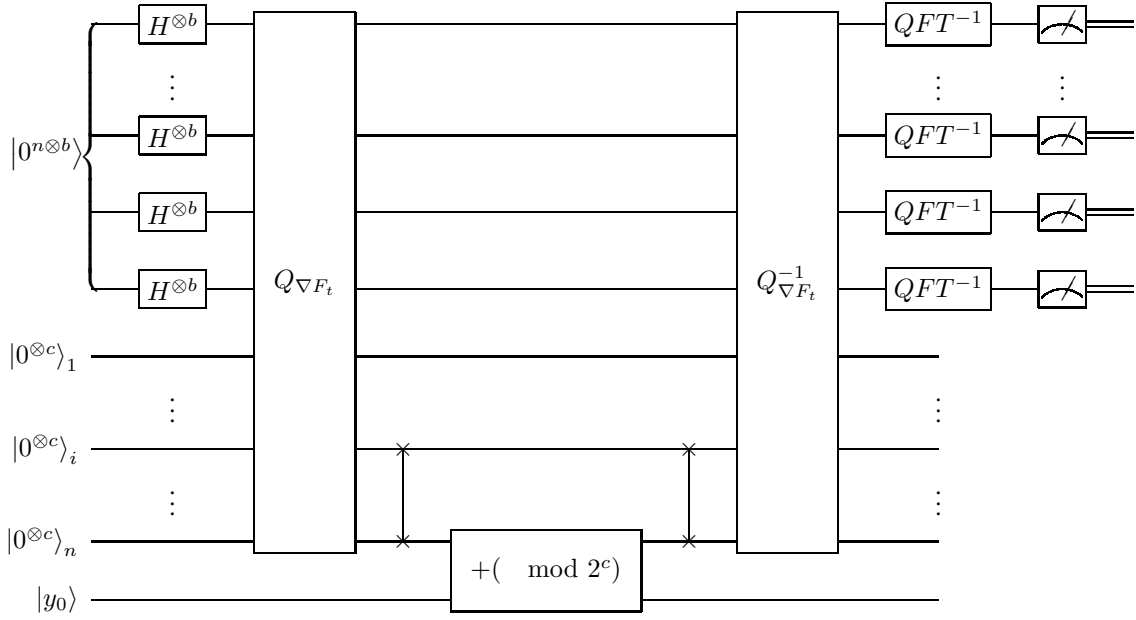

Compared with the quantum gradient estimation circuit, the sub-circuits $Q_{F_t}$ and $Q_{F_t}^{-1}$ are replaced by the first-order version $Q_{\nabla F_t}$ and $Q_{\nabla F_t}^{-1}$, respectively, where $\nabla F_t(u)=\cfrac{2^b}{2Lr_t} \left [ \nabla f_t \left (x_t+\cfrac{r_t}{2^b} \left (u-\cfrac{2^b}{2}\mathbbm{1} \right ) \right )- \nabla f_t(x_t) \right ]$, which needs $n$ $c$-qubit registers to store the output. 
Here, $L$ is the Lipschitz constant of the gradient function, and the other symbols have the same meaning as those in the quantum gradient estimation method.
Before and after the modular addition, swap gates between the $i$-th $c$ qubits register and the last $c$ qubits register are added.
Then, the estimation of the element in $i$-th row and $j$-th column of the Hessian matrix is given by $\widetilde{H}_t^{(i,j)}=\cfrac{2L}{2^b} \left(m_j-\cfrac{2^b}{2}\right)$, where $m_j$ is the measurement result of the $j$-th $b$-qubit register, for all $j\in [n]$.

Noticing that there is no online algorithm which uses the Hessian information to design the update rule.
Thus, we also give a classical framework online Newton method with Hessian update, and show that it can achieve logarithm regret.  
We still study the online exp-concave optimization problem, but assume here the player can get the gradient and the Hessian of the loss function in each round efficiently after suffering the loss.
Consequentially, the loss functions should assume to be twice differentiable. 
We assume that the gradient of the loss functions is $L$-Lipschitz continuous, that is, $|\nabla f_t(x)-\nabla f_t(y)| \leqslant L \|y-x\|,\quad \forall x, y \in \mathcal{K}$.
Similar to the exp-concave assumption, we further assume that $\forall x,y \in \mathcal{K}, f_t(y) \geq f_t(x) + \nabla f_t(x)^{\mathrm{T}}(y-x)+\frac{\eta}{2}(y-x)^{\mathrm{T}} H(f_t)(x) (y-x)$, for all $t\in [T]$, where $H(f_t)(x)$ is the Hessian of the loss function $f_t$ at point $x$. 
The right-hand side of this assumption is the Taylor expansion of the loss function omitting high-order terms and multiplying a factor to the second-order term, and thus will be easy to satisfy when $\eta$ is small since the Hessian of a convex function is positive semi-definite.

\begin{algorithm}[H]
    \caption{Online Newton Method with Hessian Update}
    \label{ONSH}
    \begin{algorithmic}[1]
        \REQUIRE Step sizes $\eta$, parameters $\epsilon$.
        \ENSURE $x_1, x_2, x_3, \dots x_T$
      \STATE  Choose the initial point $x_1\in\mathcal{K}$ randomly, let $A_0=\epsilon I_n$.
        \FOR{$t=1$ to $T$}
            \STATE Play $x_t$, observe the loss function $f_t$ from the adversary.
            \STATE Update $x_{t+1}=\hat{P}^{(A_t)}_{\mathcal{K}}\left(x_t-\frac{1}{\eta} A_t^{-1} \nabla f_t(x_t)\right)$, where $A_t=A_{t-1}+H_t$.
        \ENDFOR
    \end{algorithmic}
\end{algorithm}

We set $x_{t+1}=\hat{P}^{(A_t)}_{\mathcal{K}}\left(x_t-\frac{1}{\eta} A_t^{-1} \nabla f_t(x_t)\right)$ as the update rule, where $A_t=A_{t-1}+H(f_t)(x_t)$. Below we will write the Hessian $H(f_t)(x_t)$ as $H_t$ for short. The algorithm is presented in Algorithm \ref{ONSH}, and then we show how to choose appropriate parameters such that Algorithm \ref{ONSH} guarantees logarithm regret, which gives Theorem \ref{TNHQ}.

\begin{restatable}{mythe}{TNHQ}
    \label{TNHQ}
    Algorithm \ref{ONSH} with parameters $\eta=\frac{1}{2DL}$ and $\epsilon=L$, can achieve the regret bound $O\left(\frac{DLn}{\alpha}\log(T+1)\right)$.
\end{restatable}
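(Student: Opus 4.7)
The plan is to mirror Hazan's online Newton step regret analysis, with two substitutions that exploit the true-Hessian preconditioner: the second-order inequality $f_t(x^*) \geq f_t(x_t) + \nabla f_t(x_t)^{\mathrm{T}}(x^* - x_t) + \tfrac{\eta}{2}(x^* - x_t)^{\mathrm{T}} H_t (x^* - x_t)$ that is assumed on the losses, and the update rule $A_t = A_{t-1} + H_t$. First I would write $y_{t+1} = x_t - \eta^{-1} A_t^{-1} \nabla f_t(x_t)$ so that $\nabla f_t(x_t) = \eta A_t(x_t - y_{t+1})$, apply the three-point identity $2(a-b)^{\mathrm{T}} B(a-c) = \|a-b\|_B^2 + \|a-c\|_B^2 - \|b-c\|_B^2$ with $B = A_t$, and invoke the non-expansiveness of the generalized projection, $\|x_{t+1} - x^*\|_{A_t} \leq \|y_{t+1} - x^*\|_{A_t}$. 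Combining this with the identity $\|x_t - y_{t+1}\|_{A_t}^2 = \eta^{-2}\nabla f_t(x_t)^{\mathrm{T}} A_t^{-1} \nabla f_t(x_t)$ and the second-order inequality yields a per-round bound whose right-hand side is the sum of a gradient-quadratic $\tfrac{1}{2\eta}\nabla f_t(x_t)^{\mathrm{T}} A_t^{-1}\nabla f_t(x_t)$, a telescoping pair $\tfrac{\eta}{2}\|x_t - x^*\|_{A_t}^2 - \tfrac{\eta}{2}\|x_{t+1} - x^*\|_{A_t}^2$, and a Hessian correction $-\tfrac{\eta}{2}(x^* - x_t)^{\mathrm{T}} H_t(x^* - x_t)$.

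The key cancellation then uses $A_t = A_{t-1} + H_t$: the Hessian correction absorbs exactly the $H_t$-component of $\|x_t - x^*\|_{A_t}^2$, leaving $\|x_t - x^*\|_{A_{t-1}}^2$, so after summing in $t$ the weighted-norm pieces telescope down to $\tfrac{\eta}{2}\|x_1 - x^*\|_{A_0}^2 \leq \tfrac{\eta \epsilon D^2}{2}$. For the surviving sum $\sum_{t=1}^T \nabla f_t(x_t)^{\mathrm{T}} A_t^{-1} \nabla f_t(x_t)$ I would use the matrix inequality $\alpha\, \nabla f_t(x_t) \nabla f_t(x_t)^{\mathrm{T}} \preceq H_t$, which is the standard consequence of $\alpha$-exp-concavity obtained by requiring $\nabla^2(e^{-\alpha f_t}) \preceq 0$; this gives $\nabla f_t^{\mathrm{T}} A_t^{-1} \nabla f_t = \operatorname{tr}(A_t^{-1}\nabla f_t \nabla f_t^{\mathrm{T}}) \leq \tfrac{1}{\alpha}\operatorname{tr}(A_t^{-1} H_t)$. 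The log-determinant potential lemma then yields $\sum_t \operatorname{tr}(A_t^{-1}(A_t - A_{t-1})) \leq \log\det(A_T A_0^{-1})$, and using $A_0 = L I_n$ together with the Lipschitz-gradient bound $H_t \preceq L I_n$ gives $A_T \preceq L(T+1) I_n$, so $\sum_t \nabla f_t^{\mathrm{T}} A_t^{-1} \nabla f_t \leq \tfrac{n}{\alpha}\log(T+1)$.

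Plugging in $\eta = 1/(2DL)$ and $\epsilon = L$ produces $\sum_t (f_t(x_t) - f_t(x^*)) \leq DL \cdot \tfrac{n}{\alpha}\log(T+1) + \tfrac{D}{4}$, matching the claimed $O\!\left(\tfrac{DLn}{\alpha}\log(T+1)\right)$ bound. The main obstacle I expect is the matrix step $\alpha \nabla f_t \nabla f_t^{\mathrm{T}} \preceq H_t$ together with the log-determinant telescoping: this is what makes Hazan's classical rank-1 argument go through when the rank-1 outer-product update $\nabla f_t \nabla f_t^{\mathrm{T}}$ is replaced by the possibly full-rank $H_t$, and it is the only step at which the exp-concavity parameter $\alpha$ enters the final bound. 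Once this piece is in place, the parameter choice $\eta = 1/(2DL)$ is forced by balancing $\tfrac{1}{2\eta}\cdot\tfrac{n}{\alpha}\log(T+1)$ against $\tfrac{\eta L D^2}{2}$; the choice is consistent with the assumed second-order inequality since $\eta$ is $O(1/L)$.
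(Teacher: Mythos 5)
Your proposal is correct and follows essentially the same route as the paper's proof: the assumed second-order inequality plus the generalized-projection Pythagorean step, telescoping via $A_t = A_{t-1} + H_t$, the exp-concavity bound $\alpha\,\nabla f_t \nabla f_t^{\mathrm{T}} \preceq H_t$ to control $\operatorname{tr}\left(A_t^{-1}\nabla f_t \nabla f_t^{\mathrm{T}}\right)$, and the log-determinant potential with $H_t \preceq L I_n$ giving $\frac{n}{2\alpha\eta}\log\left((\epsilon + LT)\epsilon^{-1}\right)$. Even your final constants ($\frac{DLn}{\alpha}\log(T+1) + \frac{D}{4}$) match the paper's computation.
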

The proof is given Appendix in \ref{proof:TNHQ}.

\subsection{Extend to non-smooth loss functions}
\label{lab:SecNonSmooth}
In this section, we present how to extend Algorithm \ref{QONS} to non-smooth loss functions. 
It has been shown that Lipschitz continue functions are smooth in a small region with high probability \cite{Childs18}.
However, adopting this technique necessitates a few additional steps. 
In their proof, they shows that non-smooth loss functions are still smooth in a small
region with high probability by bounding the trace of the Hessian matrices.
As they do not assume that the loss functions are second-order differentiable (nor do we), they need to use the mollifier functions to approximate the loss functions.
Yet, for certain loss functions, the approximation error becomes uncontrollable.
To address this challenge, they introduce classical randomness into the gradient evaluation, that is, sampling randomly from the neighborhood of any given point, and then use the gradient of the sample point as the gradient of the given point.
We denote the sample point as $z_t\in B_{\infty}(x_t,r^{\prime}_t)$, where $B_{\infty}(x_t,r^{\prime}_t)$ is the ball in $L_{\infty}$ norm with radius $r^{\prime}_t\in \mathbb{R}$ and center $x_t$. 
Lemma \ref{Smooth} shows that for any sample point $z_t$, the trace of the Hessian matrices is small in a region with high probability, and therefore we can let $\beta=nG/p_t r^{\prime}_t$.

\begin{restatable}{mylem}{Smooth}
\label{Smooth}
    {(\cite{Childs18})}
    In Algorithm \ref{QONS}, for all timestep $t$, let $f_t:\mathcal{K} \to \mathbb{R}$ be a convex function with Lipschitz parameter $G$. Then for any $r_t,r^{\prime}_t>0$ and $1 \geq p_t>0$, we have
    \begin{align}
        \Pr_{z_t\in B_{\infty}(x_t,r^{\prime}_t)} \left[\exists y\in B_{\infty}(z_t,r_t), \Tr{\nabla^2 f_t(y)} \geq \frac{nG}{p_t r^{\prime}_t}\right] 
        \leq  p_t.
    \end{align}
\end{restatable}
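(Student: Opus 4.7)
The plan is to convert the geometric statement about a random $r_t$-neighborhood into an integral estimate on the enlarged box $B_{\infty}(x_t, R)$ with $R := r'_t + r_t$, and then apply Markov's inequality together with a short averaging (convolution) argument. To avoid trouble from a convex function whose Hessian a priori exists only in the distributional sense, I would work throughout with the mollified version $f_t^\epsilon = f_t * \phi_\epsilon$ (convolution with a smooth bump of radius $\epsilon$), observing that convexity and $G$-Lipschitzness are preserved under mollification, and recovering the stated bound for the original $f_t$ by sending $\epsilon \to 0$.

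The first step would be to bound the total mass of $\Tr{\nabla^2 f_t^\epsilon}$ over $B_{\infty}(x_t, R)$. Convexity gives $\partial_i^2 f_t^\epsilon \geq 0$ for each coordinate $i$, and $G$-Lipschitzness gives $|\partial_i f_t^\epsilon| \leq G$ everywhere. Applying the fundamental theorem of calculus along the $i$-th coordinate axis yields $\int_{-R}^{R} \partial_i^2 f_t^\epsilon \, ds \leq 2G$ at every fixed transverse position; Fubini over the remaining $n-1$ coordinates combined with summing over $i$ then delivers
\[
\int_{B_{\infty}(x_t, R)} \Tr{\nabla^2 f_t^\epsilon(y)} \, dy \;\leq\; 2nG (2R)^{n-1}.
\]

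The second step is Markov's inequality on the nonnegative integrand. Setting $\tau := nG/(p_t r'_t)$ and $S_\tau := \{y : \Tr{\nabla^2 f_t^\epsilon(y)} \geq \tau\}$, this produces $|S_\tau \cap B_{\infty}(x_t, R)| \leq 2nG(2R)^{n-1}/\tau$. To lift this volume bound into a probability bound, I would introduce $\tilde y = z_t + u$ with $u$ uniform on $B_{\infty}(0, r_t)$ independent of $z_t$ uniform on $B_{\infty}(x_t, r'_t)$; an easy computation shows the density of $\tilde y$ is pointwise at most $(2r'_t)^{-n}$ on its support $B_{\infty}(x_t, R)$, whence
\[
\Pr_{z_t, u}[\tilde y \in S_\tau] \;\leq\; \frac{|S_\tau \cap B_{\infty}(x_t, R)|}{(2r'_t)^n} \;\leq\; p_t,
\]
up to an absorbing factor $(R/r'_t)^{n-1}$ that is $O(1)$ under the algorithm's scaling $r_t \ll r'_t$. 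Because $\Tr{\nabla^2 f_t^\epsilon}$ is continuous, $S_\tau$ is open, so the event $\exists y \in B_{\infty}(z_t, r_t) \cap S_\tau$ is equivalent to $\Pr_u[z_t + u \in S_\tau] > 0$; that identification matches the ``$\exists y$'' clause in the lemma statement, and averaging against $u$ then closes the bound.

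The main technical subtlety I anticipate is precisely this last equivalence — pinning the pointwise ``$\exists y$'' event to the joint probability $\Pr_{z_t, u}[\tilde y \in S_\tau]$ — which for a non-smooth convex $f_t$ needs either the mollification detour sketched above or, alternatively, a reinterpretation of $\Tr{\nabla^2 f_t(y)}$ via the absolutely-continuous part of the distributional Hessian as guaranteed by Alexandrov's theorem. The coordinate-wise integration of step one is the core geometric fact doing the real work; once it is in hand, the Markov-plus-convolution combination is routine.
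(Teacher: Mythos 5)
Your proposal cannot be checked against an in-paper argument, because the paper does not prove this lemma at all --- it imports it from \cite{Childs18}. Judged on its own merits, your step one (the coordinate-wise fundamental-theorem-of-calculus bound $\int_{B_{\infty}(x_t,R)} \Tr \nabla^2 f_t^{\epsilon}(y)\,dy \leq 2nG(2R)^{n-1}$) is correct and is indeed the engine of the cited proof. The genuine gap is exactly the point you flag at the end, and neither of your proposed fixes repairs it. Your convolution argument bounds $\Pr_{z_t,u}[z_t+u \in S_\tau] = \mathbb{E}_{z_t}\bigl[\Pr_u[z_t+u \in S_\tau]\bigr]$, i.e.\ the probability that a \emph{random} point of $B_{\infty}(z_t,r_t)$ is bad; the lemma asserts that with probability $1-p_t$ \emph{no} point of $B_{\infty}(z_t,r_t)$ is bad. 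The identification ``$\exists y \in B_{\infty}(z_t,r_t)\cap S_\tau$ iff $\Pr_u[z_t+u\in S_\tau]>0$'' is true (after replacing $\geq \tau$ by $>\tau'$ to make $S_\tau$ open) but useless, because Markov runs the wrong way: for $X_z := \Pr_u[z+u\in S_\tau] \in [0,1]$ one has $\Pr_z[X_z>0] \geq \mathbb{E}_z[X_z]$, with no inequality in the direction you need. The failure is not about regularity of $f_t$, so the mollification detour with $\epsilon\to 0$ and the Alexandrov reinterpretation are both beside the point: a one-dimensional smoothed sawtooth kills the statement you are trying to prove. Take convex kinks spaced $2r_t$ apart across $[x_t-r'_t,\,x_t+r'_t]$, each with derivative jump about $2Gr_t/r'_t$ (so $f$ is $G$-Lipschitz and convex), and mollify at an independent scale $\epsilon \leq p_t r_t$; then $f'' \gtrsim G/(p_t r'_t)$ near every kink, every box $B_{\infty}(z_t,r_t)$ contains a kink, and the ``exists'' probability is $1$, even though the bad set has tiny volume --- fully consistent with your Markov estimate. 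So the lemma, read literally for an arbitrary twice-differentiable convex Lipschitz $f_t$ and arbitrary $r_t,r'_t>0$, is false, and no argument of your shape can close it.

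What the source proof actually exploits, and what the paper's informal restatement suppresses, is that the Hessian in question belongs to a mollification $f_\delta = f * m_\delta$ whose smoothing scale $\delta$ is tied to the small radius. Since $\Tr \nabla^2 f_\delta = \mu * m_\delta$ with $\mu := \Tr D^2 f \geq 0$ a measure, one gets a \emph{sup} bound, not a random-point bound:
\begin{align*}
\sup_{y \in B_{\infty}(z_t,r_t)} \Tr \nabla^2 f_\delta(y) \;\leq\; \norm{m_\delta}_{\infty}\, \mu\bigl(B_{\infty}(z_t, r_t+\delta)\bigr),
\end{align*}
and $\mathbb{E}_{z_t}\bigl[\mu(B_{\infty}(z_t,r_t+\delta))\bigr]$ is controlled by Fubini plus your step-one bound, after which Markov applies directly to the supremum. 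This works only under side conditions of the form $r_t \lesssim \delta$ and $r_t + \delta \lesssim r'_t/n$, which rule out the sawtooth (there $\epsilon \ll r_t$) and which also absorb your residual $(1+r_t/r'_t)^{n-1}$ factor --- a factor that by itself already contradicts the ``for any $r_t, r'_t>0$'' phrasing unless $r_t \lesssim r'_t/n$. In short: your geometric core is right, but the quantifier lift needs the measure-of-enlarged-box argument with the mollification scale built in, not an averaging over $u$.
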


By combining the observation that in the proof of the quantum gradient estimation, the smooth condition is only used in a small region, we can give a process (for each round) to extend our algorithm to the non-smooth case, as follows:
\begin{itemize}
     \item[1] Sample $z_t\in B_{\infty}(x_t,r^{\prime}_t)$.
     \item[2] Apply the quantum gradient estimation circuit on point $z_t$ to get the estimated gradient $\widetilde{\nabla} f_t(z_t)$, let $\widetilde{\nabla} f_t(x_t)=\widetilde{\nabla} f_t(z_t)$
     \item[3] Update $x_{t+1}=\hat{P}^{(A_t)}_{\mathcal{K}}\left(x_t-\frac{1}{\eta_t} A_t^{-1} \widetilde{\nabla} f_t(x_t)\right)$, where $A_t=A_{t-1}+\widetilde{\nabla} f_t(x_t)(\widetilde{\nabla} f_t(x_t))^{\mathrm{T}}$.
\end{itemize}
Then, by Lemma \ref{QGB}, we will establish the error bound of the form $\norm{\nabla f_t(z_t) - \widetilde{\nabla} f_t(z_t)}_1$. 
However, Lemma \ref{QNSGB} is no longer applicable here as it requires the error bound of $\norm{\nabla f_t(x_t) - \widetilde{\nabla} f_t(x_t)}_1$. 
To address this, we give Lemma \ref{QNSGBs}. Initially, we utilize the property of exp-concavity at the point $z_t$, which yields a relationship between $y$ and $z_t$. Subsequently, we leverage this relationship to establish a connection between $y$ and $x_t$.

\begin{restatable}{mylem}{QNSGBs}
    \label{QNSGBs}
    In the non-smooth version of Algorithm \ref{QONS}, for all timestep $t$, let $f_t:\mathcal{K} \to \mathbb{R}$ be $\alpha$-exp-concave functions with Lipschitz parameter $G$, where $\mathcal{K}$ is a convex set with diameter $D$, then for any $y\in \mathcal{K}$ and any $\eta_t\leq \min \left\{\frac{1}{8GD},\frac{\alpha}{2}\right\}$, the estimated gradient $\widetilde{\nabla} f_t(x_t)$ satisfies
    \begin{align}
        \label{QNSGB2}
        f_t(y)\geq & f_t(x_t)+\widetilde{\nabla} f_t(x_t)^{\mathrm{T}}(y-x_t) \nonumber \\ 
        &+\frac{\eta_t}{2}(y-x_t)^{\mathrm{T}} \widetilde{\nabla} f_t(x_t) \widetilde{\nabla} f_t(x_t)^{\mathrm{T}}(y-x_t) \nonumber \\
        &- \frac{9D}{8} \norm{\nabla f_t(z_t) - \widetilde{\nabla} f_t(z_t)}_1 \nonumber \\
        &- \frac{D}{16G} \norm{\nabla f_t(z_t) - \widetilde{\nabla} f_t(z_t)}^2_1 \nonumber \\
        & -3G\sqrt{n}r_t^{\prime} - \frac{ n r_t^{\prime 2} G}{2}.
    \end{align}
\end{restatable}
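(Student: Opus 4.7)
The plan is to reduce this to Lemma \ref{QNSGB} applied at the sample point $z_t$ rather than at $x_t$, and then to translate the resulting inequality back to $x_t$ at a cost that can be absorbed into the additive terms $-3G\sqrt{n}\, r_t'$ and $-\tfrac12 n (r_t')^2 G$. Although $f_t$ is not globally smooth, Lemma \ref{Smooth} guarantees that, with high probability over the draw of $z_t$, $f_t$ behaves like a $\beta$-smooth function (with $\beta = nG/(p_t r_t')$) throughout the ball $B_\infty(z_t, r_t)$ that the quantum gradient-estimation circuit actually probes. Since the proof of Lemma \ref{QNSGB} only uses smoothness of $f_t$ in this neighborhood, it goes through verbatim with $z_t$ in place of $x_t$, yielding a lower bound on $f_t(y)$ of the form $f_t(z_t) + \widetilde{\nabla} f_t(z_t)^{\mathrm{T}}(y - z_t) + \tfrac{\eta_t}{2}(y - z_t)^{\mathrm{T}} \widetilde{\nabla} f_t(z_t)\widetilde{\nabla} f_t(z_t)^{\mathrm{T}}(y - z_t)$ minus the two $\|\nabla f_t(z_t) - \widetilde{\nabla} f_t(z_t)\|_1$ error terms already present in \eqref{QNSGB2}.

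The next step is to rewrite each $z_t$-centered piece in terms of $x_t$, using the identity $\widetilde{\nabla} f_t(x_t) = \widetilde{\nabla} f_t(z_t)$ set in the algorithm. First, $G$-Lipschitzness of $f_t$ gives $f_t(z_t) \geq f_t(x_t) - G\sqrt{n}\, r_t'$, since $\|z_t - x_t\|_2 \leq \sqrt{n}\, r_t'$. Second, the linear term splits as $\widetilde{\nabla} f_t(x_t)^{\mathrm{T}}(y - x_t) + \widetilde{\nabla} f_t(z_t)^{\mathrm{T}}(x_t - z_t)$, and the cross piece is controlled by Cauchy--Schwarz together with $\|\widetilde{\nabla} f_t(z_t)\|_2 \leq G + \|\widetilde{\nabla} f_t(z_t) - \nabla f_t(z_t)\|_2$. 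Third, writing $W = \widetilde{\nabla} f_t(z_t)\widetilde{\nabla} f_t(z_t)^{\mathrm{T}}$ and expanding $(y - z_t)^{\mathrm{T}} W (y - z_t)$ into $(y - x_t)^{\mathrm{T}} W (y - x_t) + 2(y - x_t)^{\mathrm{T}} W (x_t - z_t) + (x_t - z_t)^{\mathrm{T}} W (x_t - z_t)$ produces the desired quadratic form plus two cross terms, each bounded via $\|W\|_{\mathrm{op}} \leq \|\widetilde{\nabla} f_t(z_t)\|_2^2$, the diameter bound $D$, and the step-size restriction $\eta_t \leq 1/(8GD)$.

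Finally, one collects all three sources of slack. The Lipschitz displacement of $f_t$ and the linear cross term contribute terms of order $G\sqrt{n}\, r_t'$ that sum to at most $3G\sqrt{n}\, r_t'$, while the pair of quadratic cross terms, once the $\eta_t$ factor is spent against $1/(8GD)$, collapses into at most $\tfrac12 n (r_t')^2 G$. Any stray factors of $\|\widetilde{\nabla} f_t(z_t) - \nabla f_t(z_t)\|$ that appear in the linear and quadratic expansions are weaker than, and hence folded into, the $\|\nabla f_t(z_t) - \widetilde{\nabla} f_t(z_t)\|_1$ error terms inherited from the $z_t$-centered version of Lemma \ref{QNSGB}.

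The main obstacle is precisely this last bookkeeping step: the quadratic translation creates a mixed term of natural scale $\eta_t G^2 D \sqrt{n}\, r_t'$ and a pure term of scale $\eta_t G^2 n (r_t')^2$, and one needs the hypothesis $\eta_t \leq \min\{1/(8GD), \alpha/2\}$ to be exactly tight enough to compress both into the clean slack shown in \eqref{QNSGB2} without losing an extra factor of $D$, $G$, or $\alpha^{-1}$. Showing this quantitatively, rather than merely up to constants, is the delicate part of the argument; everything else is a direct consequence of Lemma \ref{QNSGB}, Lemma \ref{Smooth}, and $G$-Lipschitz continuity.
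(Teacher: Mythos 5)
Your high-level plan---work at the sample point $z_t$ and translate back to $x_t$---matches the paper's, but your order of operations opens a genuine quantitative gap. The paper does \emph{not} invoke Lemma \ref{QNSGB} as a black box at $z_t$. Instead it applies the raw exp-concavity bound $f(y)\geq f(z)+\nabla f(z)^{\mathrm{T}}(y-z)+\frac{\eta}{2}(y-z)^{\mathrm{T}}\nabla f(z)\nabla f(z)^{\mathrm{T}}(y-z)$ with the \emph{exact} gradient at $z_t$, carries out the entire $z_t\to x_t$ translation while still holding $\nabla f_t(z_t)$, whose norm is bounded by $G$, so that every translation cost is a clean function of $G$, $D$, $\sqrt{n}r_t'$ (this is exactly where $-3G\sqrt{n}r_t'-\frac{nr_t'^{2}G}{2}$ comes from), and only at the very end swaps $\nabla f_t(z_t)\to\widetilde{\nabla} f_t(z_t)$ in terms already paired with $(y-x_t)$, generating $D\norm{e}_1$ from the linear term and $\frac{D}{8}\norm{e}_1+\frac{D}{16G}\norm{e}_1^2$ from the quadratic term, where $e=\nabla f_t(z_t)-\widetilde{\nabla} f_t(z_t)$; these sum to the stated $\frac{9D}{8}$ coefficient.

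In your route the swap happens first, inside the $z_t$-centered invocation of Lemma \ref{QNSGB}, so the subsequent translation must be performed with $\widetilde{\nabla} f_t(z_t)$, whose norm you can only bound by $G+\norm{e}_2$. Your linear cross term then costs $(G+\norm{e}_2)\sqrt{n}r_t'$ and the quadratic cross terms cost $\eta_t(G+\norm{e}_2)^2 D\sqrt{n}r_t'+\frac{\eta_t}{2}(G+\norm{e}_2)^2 n r_t'^{2}$, i.e.\ you pick up mixed terms of order $\sqrt{n}r_t'\norm{e}_1$ and $\sqrt{n}r_t'\norm{e}_1^2/G$ that have no counterpart in (\ref{QNSGB2}). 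Your claim that these strays are ``weaker than, and hence folded into'' the inherited $\ell_1$ terms fails: the $\frac{9D}{8}\norm{e}_1$ budget is already fully consumed by Lemma \ref{QNSGB} at $z_t$, and absorbing $\sqrt{n}r_t'\norm{e}_1$ into a $D\norm{e}_1$-type term needs an assumption like $\sqrt{n}r_t'\lesssim D$ that the lemma does not make (nor is there any a priori bound on $\norm{e}$, the lemma being a deterministic statement about whatever the circuit outputs). A smaller slippage of the same kind: applied at $z_t$, Lemma \ref{QNSGB} pairs the error with $\norm{y-z_t}_\infty\leq D+r_t'$ rather than $D$, so even the coefficients $\frac{9D}{8}$ and $\frac{D}{16G}$ degrade. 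Finally, your appeal to Lemma \ref{Smooth} is a red herring here: the proof of Lemma \ref{QNSGB} uses no smoothness at all (only exp-concavity, Lipschitzness, and the diameter bound); smoothness enters only through the error bound of Lemma \ref{QGB} when proving Theorem \ref{TNQs}. Your argument can be repaired either by deferring the exact-to-estimated swap until after the translation, as the paper does, or by accepting coefficients inflated by $O(\sqrt{n}r_t')$---which would still suffice for Theorem \ref{TNQs}, since those terms are summable over $t$, but would not prove the lemma with the constants as stated.
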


The proof is given in Appendix \ref{proof:QNSGBs}. The remaining part involves parameters selections, which is similar to the proof of Theorem \ref{TNQ}.

\begin{restatable}{mythe}{TNQs}
    \label{TNQs}
    The non-smooth version of Algorithm \ref{QONS} with parameters $\eta=\min\left\{\frac{1}{8GD},\frac{\alpha}{2}\right\}$, $\epsilon=\frac{1}{\eta^2 D^2}$, $\left\{r_t=\frac{\rho p}{\pi n^{9/2} (n/\rho +1)  t^2}\right\}_{t=1}^T$, $\left\{r^{\prime}_t=\frac{1}{t\sqrt{n}}\right\}_{t=1}^T$ can achieve the regret bound $O\left(\left(DG+\frac{1}{\alpha}\right)n\log(T+\log T)\right)$ with probability greater than $1-T(\rho+p)$, and its query complexity is $O(1)$ in each round.
\end{restatable}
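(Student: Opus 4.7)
The plan is to set $y=x^*$ in Lemma \ref{QNSGBs}, sum the inequality over $t=1,\dots,T$, and bound each of the resulting five sums by a judicious parameter choice together with Lemma \ref{Smooth} and Lemma \ref{QGB}.

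First, the "Newton-step progress" piece, namely the sum of the first three terms on the right-hand side of \eqref{QNSGB2}, is handled exactly as in the proof of Theorem \ref{TNQ}. Using the update rule $x_{t+1}=\hat{P}^{(A_t)}_{\mathcal{K}}(x_t-\tfrac{1}{\eta}A_t^{-1}\widetilde{\nabla} f_t(x_t))$ with $A_t=A_{t-1}+\widetilde{\nabla} f_t(x_t)\widetilde{\nabla} f_t(x_t)^{\mathrm{T}}$ and the chosen $\eta,\epsilon$, one obtains the telescoping bound $\sum_t\widetilde{\nabla} f_t(x_t)^{\mathrm{T}}(x_t-x^*)-\tfrac{\eta}{2}\sum_t((x^*-x_t)^{\mathrm{T}}\widetilde{\nabla} f_t(x_t))^2 \le \tfrac{1}{2\eta}\log\det(A_T/A_0)+\tfrac{\epsilon D^2}{2\eta}= O((DG+1/\alpha)\,n\log T)$, since $\|\widetilde{\nabla} f_t\|$ is bounded by $G$ plus a small estimation error and so $\log\det A_T=O(n\log T)$.

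Second, the gradient-estimation residuals. For each round I would first draw $z_t\in B_\infty(x_t,r_t')$ and invoke Lemma \ref{Smooth} with parameter $p_t=p$, so that, with probability $\ge 1-p$, $f_t$ is $\beta_t$-smooth on $B_\infty(z_t,r_t)$ with $\beta_t=nG/(p\,r_t')$. Conditioning on this event and invoking Lemma \ref{QGB} at $z_t$ with $\rho_t=\rho$ gives $\|\nabla f_t(z_t)-\widetilde{\nabla} f_t(z_t)\|_1\le 8\pi n^3(n/\rho+1)\beta_t r_t/\rho$ with probability $\ge 1-\rho$. Substituting $r_t'=1/(t\sqrt n)$ into $\beta_t$ gives $\beta_t=n^{3/2}Gt/p$, and the choice $r_t=\rho p/(\pi n^{9/2}(n/\rho+1)t^2)$ collapses the whole bound to $O(G/t)$. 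Consequently $\sum_t \tfrac{9D}{8}\|\nabla f_t(z_t)-\widetilde{\nabla} f_t(z_t)\|_1 = O(DG\log T)$ and $\sum_t \tfrac{D}{16G}\|\cdot\|_1^2=O(DG)$. The two mollifier-type residuals in \eqref{QNSGB2} behave even better: $\sum_t 3G\sqrt n\,r_t'=\sum_t 3G/t=O(G\log T)$ and $\sum_t nGr_t'^2/2=\sum_t G/(2t^2)=O(G)$. Adding these to the Newton-step term yields the claimed regret. A union bound across all $T$ rounds over both failure events (smoothness break-down and gradient-estimation error) gives overall success probability at least $1-T(\rho+p)$, and the per-round query complexity is unchanged at $O(1)$ because the quantum circuit is the same one analyzed for Algorithm \ref{QONS}.

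The main obstacle will be the careful bookkeeping between the shifted evaluation point $z_t$ and the original $x_t$: the bound produced by Lemma \ref{QGB} involves $\nabla f_t(z_t)-\widetilde{\nabla} f_t(z_t)$, whereas Lemma \ref{QNSGBs} is tailored precisely to this form, so I must verify that no additional $\nabla f_t(x_t)-\nabla f_t(z_t)$ Lipschitz term survives outside the $r_t'$-contributions already accounted for. In addition the effective smoothness $\beta_t$ itself depends on $r_t'$, so the product $r_t\beta_t$ must be tuned jointly with $r_t'$; the stated parameters are chosen exactly so that $r_t\beta_t/\rho\propto 1/t^2$, which is the sharpest summable dependence, and any weaker choice would cost an extra factor of $T$ in the regret. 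Once these bookkeeping details are checked, the summation of the five pieces is routine and produces the stated $O((DG+1/\alpha)\,n\log(T+\log T))$ regret.
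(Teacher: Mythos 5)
Your proposal matches the paper's proof essentially step for step: set $y=x^*$ in Lemma \ref{QNSGBs}, reuse the telescoping $\log\det$ analysis from Theorem \ref{TNQ} for the first three terms, plug $\beta_t = nG/(p\,r_t') = n^{3/2}Gt/p$ from Lemma \ref{Smooth} into Lemma \ref{QGB} so that the stated $r_t$ collapses the estimation error to $O(G/t)$, sum the four residual series to $O(DG\log T)$, and union-bound the two per-round failure events to obtain success probability $1-T(\rho+p)$ with unchanged $O(1)$ query cost. The only slip is cosmetic: the initial-point term in the telescoping bound should be $\frac{\epsilon\eta D^2}{2}=\frac{1}{2\eta}$ rather than $\frac{\epsilon D^2}{2\eta}$, which does not affect the stated regret.
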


The proof is given in Appendix \ref{proof:TNQs}.

\section{Conclusion and Discussion}
\label{lab:SecConclusion}

\begin{table*}[ht]
    \centering
    \begin{threeparttable}
    \caption{Summary of the update rules.}
    \begin{tabular}{l l}
        \toprule
         Algorithm & Update rules \\
        \midrule
        \makecell[l]{Quantum online quasi-Newton method \\ (Section \ref{lab:SecQON})} & \makecell[l]{$x_{t+1}=\hat{P}^{(A_t)}_{\mathcal{K}}\left(x_t-\frac{1}{\eta} A_t^{-1} \widetilde{\nabla} f_t(x_t)\right)$, \\ where $A_t=A_{t-1}+\widetilde{\nabla} f_t(x_t)(\widetilde{\nabla} f_t(x_t))^{\mathrm{T}}$} \\
        \makecell[l]{Quantum adaptive gradient method \\ (Subsection \ref{lab:SecExtension})} & \makecell[l]{$x_{t+1}=\hat{P}^{(G_t)}_{\mathcal{K}}\left(x_t-\frac{1}{\gamma} G_t^{-1} \widetilde{\nabla} f_t(x_t)\right)$, \\ where $G_t=\text{diag}\left(A_t\right)^{1-p}$ and $A_t=A_{t-1}+\widetilde{\nabla} f_t(x_t)(\widetilde{\nabla} f_t(x_t))^{\mathrm{T}}$} \\
        \makecell[l]{Quantum online Newton method \\ (Subsection \ref{lab:SecON})} & \makecell[l]{$x_{t+1}=\hat{P}^{(A_t)}_{\mathcal{K}}\left(x_t-\frac{1}{\eta} A_t^{-1} \nabla f_t(x_t)\right)$, \\ where $A_t=A_{t-1}+\widetilde{H}_t$}\\
        \bottomrule
    \end{tabular}
    \label{tab:SU}
    \end{threeparttable}
\end{table*}

In this paper, we considered the multi-points bandit feedback setting for online exp-concave optimization against the completely adaptive adversary. 
We provided a quantum algorithm and proved that it can achieve $O(n\log{T})$ regret where only $O(1)$ queries were needed in each round.
These results showed that potential quantum advantages are possible for problems of online exp-concave optimization under such setting since the classical online Newton method with zero-order feedback only achieves $O(n T^{2/3}\log T)$ regret.
Furthermore, we presented how to extend our work to quantum adaptive gradient method and quantum online Newton method with Hessian update.
For the Hessian update case, we also showed how to extend the quantum gradient estimation algorithm to the quantum Hessian estimation algorithm for the case that the quantum first-order oracle is accessible.
Table \ref{tab:SU} summarizes the update rules of these algorithms.

We also proposed a classical framework of online Newton method with Hessian update, and showed that it can achieve logarithm regret with appropriate parameters.
There are still much room for research between the Hessian update (also known as full-rank update) and the rank-1 update as shown in Algorithm \ref{ONS} and \ref{QONS}, namely, online Newton methods with rank-$k$ update for $1<k<n$ have not been explored yet.
In the offline optimization setting, the convergence rate is better with higher rank update \cite{liu2023symmetric,liu2023block}, but for the online optimization setting, the problem whether high rank update can improve the regret bound still remains open.

Furthermore, online algorithms with the quantum first-order oracle have not been fully explored.
With the quantum first-order oracle, given the superposition input, one can receive a superposition state of corresponding gradients at different points, but due to the properties of quantum computing, we need to explore algorithmic design to utilize some kind of overall information if we want to keep the advantage of quantum computing.
Thus, on one hand, one needs to answer whether it can guarantee a better regret if we can make use of some kind of overall information of gradients of the loss function at different points.
On the other hand, one also needs to consider how to design a quantum algorithm to estimate the Hessian matrix.
But if we design the quantum Hessian estimation algorithm with the first-order oracle using the same framework of the quantum gradient estimation algorithm with the zeroth-order oracle, it can only output one row of the Hessian matrix per execution, which will lead to $O(n)$ queries for estimating the total Hessian matrix.
Thus, it's also interesting to consider whether there exists a quantum algorithm that can estimate the Hessian with $O(1)$-query to the quantum first-order oracle.

\section*{Acknowledgements}
The work of John C.S. Lui was supported in part by the RGC SRFS2122-4S02. The work of Lvzhou Li was supported in part by the National Science Foundation of China under Grant 62272492 and the Guangdong Provincial Quantum Science Strategic Initiative under Grant GDZX2303007.

\section*{Impact Statement} 
This paper presents work whose goal is to advance the field of Machine Learning. There are many potential societal consequences of our work, none which we feel must be specifically highlighted here.

\normalem
\bibliography{QONS.bib}

\begin{thebibliography}{63}
\providecommand{\natexlab}[1]{#1}
\providecommand{\url}[1]{\texttt{#1}}
\expandafter\ifx\csname urlstyle\endcsname\relax
  \providecommand{\doi}[1]{doi: #1}\else
  \providecommand{\doi}{doi: \begingroup \urlstyle{rm}\Url}\fi

\bibitem[Abernethy et~al.(2012)Abernethy, Hazan, and
  Rakhlin]{abernethy2012interior}
Abernethy, J.~D., Hazan, E., and Rakhlin, A.
\newblock Interior-point methods for full-information and bandit online
  learning.
\newblock \emph{IEEE Transactions on Information Theory}, 58\penalty0
  (7):\penalty0 4164--4175, 2012.

\bibitem[Agarwal et~al.(2010)Agarwal, Dekel, and Xiao]{Agarwal2010}
Agarwal, A., Dekel, O., and Xiao, L.
\newblock Optimal algorithms for online convex optimization with multi-point
  bandit feedback.
\newblock In \emph{Conference on Learning Theory}, pp.\  28--40, New York, NY,
  2010. ACM.

\bibitem[Ambainis \& {\v{S}}palek(2006)Ambainis and {\v{S}}palek]{Ambainis2006}
Ambainis, A. and {\v{S}}palek, R.
\newblock Quantum algorithms for matching and network flows.
\newblock In \emph{Annual Symposium on Theoretical Aspects of Computer
  Science}, pp.\  172--183, Berlin, 2006. Springer.

\bibitem[Awerbuch \& Kleinberg(2008)Awerbuch and Kleinberg]{awerbuch2008online}
Awerbuch, B. and Kleinberg, R.
\newblock Online linear optimization and adaptive routing.
\newblock \emph{Journal of Computer and System Sciences}, 74\penalty0
  (1):\penalty0 97--114, 2008.

\bibitem[Bartlett et~al.(2015)Bartlett, Koolen, Malek, Takimoto, and
  Warmuth]{bartlett2015minimax}
Bartlett, P.~L., Koolen, W.~M., Malek, A., Takimoto, E., and Warmuth, M.~K.
\newblock Minimax fixed-design linear regression.
\newblock In \emph{Conference on Learning Theory}, pp.\  226--239. PMLR, 2015.

\bibitem[Blekos et~al.(2023)Blekos, Brand, Ceschini, Chou, Li, Pandya, and
  Summer]{blekos2023review}
Blekos, K., Brand, D., Ceschini, A., Chou, C.-H., Li, R.-H., Pandya, K., and
  Summer, A.
\newblock A review on quantum approximate optimization algorithm and its
  variants.
\newblock \emph{arXiv preprint arXiv:2306.09198}, 2023.

\bibitem[Brandao \& Svore(2017)Brandao and Svore]{Fernando17}
Brandao, F.~G. and Svore, K.~M.
\newblock Quantum speed-ups for solving semidefinite programs.
\newblock In \emph{2017 IEEE 58th Annual Symposium on Foundations of Computer
  Science (FOCS)}, pp.\  415--426, Piscataway, NJ, 2017. IEEE.

\bibitem[Brassard et~al.(2002)Brassard, H{\o}yer, Mosca, and Tapp]{Brassard00}
Brassard, G., H{\o}yer, P., Mosca, M., and Tapp, A.
\newblock Quantum amplitude amplification and estimation.
\newblock \emph{Contemporary Mathematics}, 305:\penalty0 53--74, 2002.

\bibitem[Bubeck \& Eldan(2016)Bubeck and Eldan]{Bubeck2016}
Bubeck, S. and Eldan, R.
\newblock Multi-scale exploration of convex functions and bandit convex
  optimization.
\newblock In \emph{Conference on Learning Theory}, pp.\  583--589, New York,
  NY, 2016. ACM.

\bibitem[Bubeck et~al.(2017)Bubeck, Lee, and Eldan]{bubeck2017}
Bubeck, S., Lee, Y.~T., and Eldan, R.
\newblock Kernel-based methods for bandit convex optimization.
\newblock In \emph{Proceedings of the 49th Annual ACM SIGACT Symposium on
  Theory of Computing}, pp.\  72--85, New York, NY, 2017. ACM.

\bibitem[Casal{\'e} et~al.(2020)Casal{\'e}, Di~Molfetta, Kadri, and
  Ralaivola]{casale2020}
Casal{\'e}, B., Di~Molfetta, G., Kadri, H., and Ralaivola, L.
\newblock Quantum bandits.
\newblock \emph{Quantum Machine Intelligence}, 2\penalty0 (1):\penalty0 1--7,
  2020.

\bibitem[Chakrabarti et~al.(2020)Chakrabarti, Childs, Li, and Wu]{Childs18}
Chakrabarti, S., Childs, A.~M., Li, T., and Wu, X.
\newblock Quantum algorithms and lower bounds for convex optimization.
\newblock \emph{Quantum}, 4:\penalty0 221, 2020.

\bibitem[Duchi et~al.(2011)Duchi, Hazan, and Singer]{duchi2011adaptive}
Duchi, J., Hazan, E., and Singer, Y.
\newblock Adaptive subgradient methods for online learning and stochastic
  optimization.
\newblock \emph{Journal of machine learning research}, 12\penalty0 (7), 2011.

\bibitem[D{\"u}rr \& H{\o}yer(1996)D{\"u}rr and H{\o}yer]{Durr1996}
D{\"u}rr, C. and H{\o}yer, P.
\newblock A quantum algorithm for finding the minimum.
\newblock \emph{arXiv preprint quant-ph/9607014}, 1996.

\bibitem[D{\"u}rr et~al.(2006)D{\"u}rr, Heiligman, H{\o}yer, and
  Mhalla]{Durr2006}
D{\"u}rr, C., Heiligman, M., H{\o}yer, P., and Mhalla, M.
\newblock Quantum query complexity of some graph problems.
\newblock \emph{SIAM Journal on Computing}, 35\penalty0 (6):\penalty0
  1310--1328, 2006.

\bibitem[Flaxman et~al.(2005)Flaxman, Kalai, Kalai, and McMahan]{Flaxman2005}
Flaxman, A.~D., Kalai, A.~T., Kalai, A.~T., and McMahan, H.~B.
\newblock Online convex optimization in the bandit setting: gradient descent
  without a gradient.
\newblock In \emph{Proceedings of the sixteenth annual ACM-SIAM symposium on
  Discrete algorithms}, pp.\  385--394, Philadelphia, PA, 2005. SIAM.

\bibitem[Gaillard et~al.(2019)Gaillard, Gerchinovitz, Huard, and
  Stoltz]{gaillard2019uniform}
Gaillard, P., Gerchinovitz, S., Huard, M., and Stoltz, G.
\newblock Uniform regret bounds over $r^d$ for the sequential linear regression
  problem with the square loss.
\newblock In \emph{Algorithmic Learning Theory}, pp.\  404--432. PMLR, 2019.

\bibitem[Gily{\'e}n et~al.(2019)Gily{\'e}n, Arunachalam, and Wiebe]{Gilyen2019}
Gily{\'e}n, A., Arunachalam, S., and Wiebe, N.
\newblock Optimizing quantum optimization algorithms via faster quantum
  gradient computation.
\newblock In \emph{Proceedings of the Thirtieth Annual ACM-SIAM Symposium on
  Discrete Algorithms}, pp.\  1425--1444, Philadelphia, PA, 2019. SIAM.

\bibitem[Grover(1996)]{Grover1996}
Grover, L.~K.
\newblock A fast quantum mechanical algorithm for database search.
\newblock In \emph{Proceedings of the twenty-eighth annual ACM symposium on
  Theory of computing}, pp.\  212--219, New York, NY, 1996. ACM.

\bibitem[Harrow \& Montanaro(2017)Harrow and Montanaro]{Harrow2017}
Harrow, A.~W. and Montanaro, A.
\newblock Quantum computational supremacy.
\newblock \emph{Nature}, 549\penalty0 (7671):\penalty0 203, 2017.

\bibitem[Harrow et~al.(2009)Harrow, Hassidim, and Lloyd]{Harrow2009}
Harrow, A.~W., Hassidim, A., and Lloyd, S.
\newblock Quantum algorithm for linear systems of equations.
\newblock \emph{Physical review letters}, 103\penalty0 (15):\penalty0 150502,
  2009.

\bibitem[Hazan(2016)]{Hazan16}
Hazan, E.
\newblock Introduction to online convex optimization.
\newblock \emph{Foundations and Trends{\textregistered} in Optimization},
  2\penalty0 (3-4):\penalty0 157--325, 2016.

\bibitem[Hazan \& Kale(2012)Hazan and Kale]{hazan2012projection}
Hazan, E. and Kale, S.
\newblock Projection-free online learning.
\newblock In \emph{29th International Conference on Machine Learning, ICML
  2012}, pp.\  521--528, 2012.

\bibitem[Hazan et~al.(2007)Hazan, Agarwal, and Kale]{Hazan2007}
Hazan, E., Agarwal, A., and Kale, S.
\newblock Logarithmic regret algorithms for online convex optimization.
\newblock \emph{Machine Learning}, 69\penalty0 (2-3):\penalty0 169--192, 2007.

\bibitem[He et~al.(2022)He, Yang, Zhang, and Li]{he2022quantum}
He, J., Yang, F., Zhang, J., and Li, L.
\newblock Quantum algorithm for online convex optimization.
\newblock \emph{Quantum Science and Technology}, 7\penalty0 (2):\penalty0
  025022, 2022.

\bibitem[He et~al.(2020)He, Zhang, and Sun]{He2020}
He, X., Zhang, J., and Sun, X.
\newblock Quantum search with prior knowledge.
\newblock \emph{arXiv preprint arXiv:2009.08721}, 2020.

\bibitem[J{\'e}z{\'e}quel et~al.(2022)J{\'e}z{\'e}quel, Ostrovskii, and
  Gaillard]{jezequel2022efficient}
J{\'e}z{\'e}quel, R., Ostrovskii, D.~M., and Gaillard, P.
\newblock Efficient and near-optimal online portfolio selection.
\newblock \emph{arXiv preprint arXiv:2209.13932}, 2022.

\bibitem[Jordan(2005)]{Jordan05}
Jordan, S.~P.
\newblock Fast quantum algorithm for numerical gradient estimation.
\newblock \emph{Physical review letters}, 95\penalty0 (5):\penalty0 050501,
  2005.

\bibitem[Kalev et~al.(2019)Kalev, Li, Lin, Svore, and Wu]{Fernando19}
Kalev, A., Li, T., Lin, C. Y.-Y., Svore, K.~M., and Wu, X.
\newblock Quantum sdp solvers: Large speed-ups, optimality, and applications to
  quantum learning.
\newblock \emph{Leibniz international proceedings in informatics}, 2019.

\bibitem[Kerenidis \& Prakash(2020{\natexlab{a}})Kerenidis and Prakash]{KP18}
Kerenidis, I. and Prakash, A.
\newblock A quantum interior point method for lps and sdps.
\newblock \emph{ACM Transactions on Quantum Computing}, 1\penalty0
  (1):\penalty0 1--32, 2020{\natexlab{a}}.

\bibitem[Kerenidis \& Prakash(2020{\natexlab{b}})Kerenidis and
  Prakash]{kerenidis2020}
Kerenidis, I. and Prakash, A.
\newblock Quantum gradient descent for linear systems and least squares.
\newblock \emph{Physical Review A}, 101\penalty0 (2):\penalty0 022316,
  2020{\natexlab{b}}.

\bibitem[Kerenidis et~al.(2019{\natexlab{a}})Kerenidis, Prakash, and
  Szil{\'a}gyi]{Kerenidis2019P}
Kerenidis, I., Prakash, A., and Szil{\'a}gyi, D.
\newblock Quantum algorithms for portfolio optimization.
\newblock In \emph{Proceedings of the 1st ACM Conference on Advances in
  Financial Technologies}, pp.\  147--155, 2019{\natexlab{a}}.

\bibitem[Kerenidis et~al.(2019{\natexlab{b}})Kerenidis, Prakash, and
  Szil{\'a}gyi]{Kerenidis2019Svm}
Kerenidis, I., Prakash, A., and Szil{\'a}gyi, D.
\newblock Quantum algorithms for second-order cone programming and support
  vector machines.
\newblock \emph{arXiv preprint arXiv:1908.06720}, 2019{\natexlab{b}}.

\bibitem[Kerenidis et~al.(2019{\natexlab{c}})Kerenidis, Prakash, and
  Szil{\'a}gyi]{Kerenidis2019s}
Kerenidis, I., Prakash, A., and Szil{\'a}gyi, D.
\newblock A quantum interior-point method for second-order cone programming.
\newblock \emph{[Research Report] IRIF. 2019. ffhal-02138307},
  2019{\natexlab{c}}.

\bibitem[Lattimore(2020)]{lattimore2020}
Lattimore, T.
\newblock Improved regret for zeroth-order adversarial bandit convex
  optimisation.
\newblock \emph{Mathematical Statistics and Learning}, 2\penalty0 (3):\penalty0
  311--334, 2020.

\bibitem[Lattimore \& Szepesv{\'a}ri(2020)Lattimore and
  Szepesv{\'a}ri]{lattimore2020b}
Lattimore, T. and Szepesv{\'a}ri, C.
\newblock \emph{Bandit algorithms}.
\newblock Cambridge University Press, 2020.

\bibitem[Li \& Zhang(2022)Li and Zhang]{li2022quantum}
Li, T. and Zhang, R.
\newblock Quantum speedups of optimizing approximately convex functions with
  applications to logarithmic regret stochastic convex bandits.
\newblock \emph{Advances in Neural Information Processing Systems},
  35:\penalty0 3152--3164, 2022.

\bibitem[Li et~al.(2019)Li, Chakrabarti, and Wu]{Li2019}
Li, T., Chakrabarti, S., and Wu, X.
\newblock Sublinear quantum algorithms for training linear and kernel-based
  classifiers.
\newblock In \emph{International Conference on Machine Learning}, pp.\
  3815--3824, Palo Alto, CA, 2019. AAAI.

\bibitem[Lim \& Rebentrost(2022)Lim and Rebentrost]{lim2022quantum}
Lim, D. and Rebentrost, P.
\newblock A quantum online portfolio optimization algorithm.
\newblock \emph{arXiv preprint arXiv:2208.14749}, 2022.

\bibitem[Liu et~al.(2018)Liu, Li, Song, Liang, Jian, and Liu]{liu2018online}
Liu, B., Li, J., Song, Y., Liang, X., Jian, L., and Liu, H.
\newblock Online {N}ewton step algorithm with estimated gradient, 2018.
\newblock URL \url{https://arxiv.org/pdf/1811.09955v3.pdf}.

\bibitem[Liu et~al.(2023{\natexlab{a}})Liu, Chen, and Luo]{liu2023symmetric}
Liu, C., Chen, C., and Luo, L.
\newblock Symmetric rank-$ k $ methods.
\newblock \emph{arXiv preprint arXiv:2303.16188}, 2023{\natexlab{a}}.

\bibitem[Liu et~al.(2023{\natexlab{b}})Liu, Chen, Luo, and Lui]{liu2023block}
Liu, C., Chen, C., Luo, L., and Lui, J.~C.
\newblock Block {B}royden's methods for solving nonlinear equations.
\newblock In \emph{Advances in Neural Information Processing Systems},
  2023{\natexlab{b}}.

\bibitem[Luo et~al.(2018)Luo, Wei, and Zheng]{luo2018efficient}
Luo, H., Wei, C.-Y., and Zheng, K.
\newblock Efficient online portfolio with logarithmic regret.
\newblock \emph{Advances in neural information processing systems}, 31, 2018.

\bibitem[Mhammedi \& Gatmiry(2023)Mhammedi and Gatmiry]{mhammedi2023quasi}
Mhammedi, Z. and Gatmiry, K.
\newblock Quasi-{N}ewton steps for efficient online exp-concave optimization.
\newblock In \emph{The Thirty Sixth Annual Conference on Learning Theory}, pp.\
   4473--4503. PMLR, 2023.

\bibitem[Mhammedi \& Rakhlin(2022)Mhammedi and Rakhlin]{mhammedi2022damped}
Mhammedi, Z. and Rakhlin, A.
\newblock Damped online {N}ewton step for portfolio selection.
\newblock In \emph{Conference on Learning Theory}, pp.\  5561--5595. PMLR,
  2022.

\bibitem[Mizel(2009)]{Mizel2009}
Mizel, A.
\newblock Critically damped quantum search.
\newblock \emph{Physical review letters}, 102\penalty0 (15):\penalty0 150501,
  2009.

\bibitem[Nielsen \& Chuang(2002)Nielsen and Chuang]{Nielsen2002}
Nielsen, M.~A. and Chuang, I.
\newblock \emph{Quantum computation and quantum information}.
\newblock Cambridge University Press, Cambridge, 2002.

\bibitem[Rakhlin et~al.(2015)Rakhlin, Sridharan, and Tewari]{rakhlin2015online}
Rakhlin, A., Sridharan, K., and Tewari, A.
\newblock Online learning via sequential complexities.
\newblock \emph{J. Mach. Learn. Res.}, 16\penalty0 (1):\penalty0 155--186,
  2015.

\bibitem[Rebentrost et~al.(2019)Rebentrost, Schuld, Wossnig, Petruccione, and
  Lloyd]{Rebentrost2019}
Rebentrost, P., Schuld, M., Wossnig, L., Petruccione, F., and Lloyd, S.
\newblock Quantum gradient descent and {N}ewton’s method for constrained
  polynomial optimization.
\newblock \emph{New Journal of Physics}, 21\penalty0 (7):\penalty0 073023,
  2019.

\bibitem[Sadowski(2015)]{Sadowski2015}
Sadowski, P.
\newblock Quantum search with prior knowledge.
\newblock \emph{arXiv preprint arXiv:1506.04030}, 2015.

\bibitem[Shamir(2013)]{Shamir2013}
Shamir, O.
\newblock On the complexity of bandit and derivative-free stochastic convex
  optimization.
\newblock In \emph{Conference on Learning Theory}, pp.\  3--24, New York, NY,
  2013. ACM.

\bibitem[Shamir(2017)]{Shamir2017}
Shamir, O.
\newblock An optimal algorithm for bandit and zero-order convex optimization
  with two-point feedback.
\newblock \emph{The Journal of Machine Learning Research}, 18\penalty0
  (1):\penalty0 1703--1713, 2017.

\bibitem[Sherman \& Morrison(1950)Sherman and Morrison]{sherman1950adjustment}
Sherman, J. and Morrison, W.~J.
\newblock Adjustment of an inverse matrix corresponding to a change in one
  element of a given matrix.
\newblock \emph{The Annals of Mathematical Statistics}, 21\penalty0
  (1):\penalty0 124--127, 1950.

\bibitem[Shor(1999)]{Shor1999}
Shor, P.~W.
\newblock Polynomial-time algorithms for prime factorization and discrete
  logarithms on a quantum computer.
\newblock \emph{SIAM review}, 41\penalty0 (2):\penalty0 303--332, 1999.

\bibitem[van Apeldoorn \& Gily{\'e}n(2019{\natexlab{a}})van Apeldoorn and
  Gily{\'e}n]{Joran19}
van Apeldoorn, J. and Gily{\'e}n, A.
\newblock Improvements in quantum sdp-solving with applications.
\newblock In \emph{46th International Colloquium on Automata, Languages, and
  Programming}, pp.\ ~99, Wadern, 2019{\natexlab{a}}. Schloss
  Dagstuhl-Leibniz-Zentrum fuer Informatik.

\bibitem[van Apeldoorn \& Gily{\'e}n(2019{\natexlab{b}})van Apeldoorn and
  Gily{\'e}n]{Van2019}
van Apeldoorn, J. and Gily{\'e}n, A.
\newblock Quantum algorithms for zero-sum games.
\newblock \emph{arXiv preprint arXiv:1904.03180}, 2019{\natexlab{b}}.

\bibitem[Van~Apeldoorn et~al.(2017)Van~Apeldoorn, Gily{\'e}n, Gribling, and
  de~Wolf]{Joran17}
Van~Apeldoorn, J., Gily{\'e}n, A., Gribling, S., and de~Wolf, R.
\newblock Quantum sdp-solvers: Better upper and lower bounds.
\newblock In \emph{2017 IEEE 58th Annual Symposium on Foundations of Computer
  Science (FOCS)}, pp.\  403--414, Piscataway, NJ, 2017. IEEE.

\bibitem[van Apeldoorn et~al.(2020)van Apeldoorn, Gily{\'e}n, Gribling, and
  de~Wolf]{DeWolf18}
van Apeldoorn, J., Gily{\'e}n, A., Gribling, S., and de~Wolf, R.
\newblock Convex optimization using quantum oracles.
\newblock \emph{Quantum}, 4:\penalty0 220, 2020.

\bibitem[Wan et~al.(2023)Wan, Zhang, Li, Zhang, and Sun]{wan2023quantum}
Wan, Z., Zhang, Z., Li, T., Zhang, J., and Sun, X.
\newblock Quantum multi-armed bandits and stochastic linear bandits enjoy
  logarithmic regrets.
\newblock In \emph{Proceedings of the AAAI Conference on Artificial
  Intelligence}, volume~37, pp.\  10087--10094, 2023.

\bibitem[Wang et~al.(2021)Wang, You, Li, and Childs]{wang2020}
Wang, D., You, X., Li, T., and Childs, A.~M.
\newblock Quantum exploration algorithms for multi-armed bandits.
\newblock In \emph{Proceedings of the AAAI Conference on Artificial
  Intelligence}, volume~35, pp.\  10102--10110, 2021.

\bibitem[Woodbury(1950)]{woodbury1950inverting}
Woodbury, M.~A.
\newblock \emph{Inverting modified matrices}.
\newblock Department of Statistics, Princeton University, 1950.

\bibitem[Yoder et~al.(2014)Yoder, Low, and Chuang]{Yoder2014}
Yoder, T.~J., Low, G.~H., and Chuang, I.~L.
\newblock Fixed-point quantum search with an optimal number of queries.
\newblock \emph{Physical review letters}, 113\penalty0 (21):\penalty0 210501,
  2014.

\bibitem[Zinkevich(2003)]{Zinkevich03}
Zinkevich, M.
\newblock Online convex programming and generalized infinitesimal gradient
  ascent.
\newblock In \emph{Proceedings of the 20th International Conference on Machine
  Learning}, pp.\  928--936, Palo Alto, CA, 2003. AAAI.

\end{thebibliography}
\bibliographystyle{icml2024}

\newpage
\appendix
\onecolumn

\section{Extended Related Works}
\label{lab:SecERW}
\subsection{Extended Related Works of Online Convex Optimizaiton}
\label{Subsec:ERWOCO}

In general, less feedback information received by the player in each round will lead to a worse regret. In early works \cite{Zinkevich03,Hazan2007}, it was assumed that the player can get full information of loss functions as feedback, or has access to gradient oracles of loss functions. 
In 2003, Zinkevich \yrcite{Zinkevich03} defined the online convex optimization model and showed that with gradient oracles feedback, the online gradient descent could achieve $O(\sqrt{T})$ regret.
Additionally, it has been shown that the lower bound of this setting is $\Omega(\sqrt{T})$ (Theorem 3.2 of \cite{Hazan16}).

Contrarily, online convex optimization in a single-query bandit setting was proposed by Flaxman et al. \yrcite{Flaxman2005}, where the only feedback was the value of the loss. 
Note that in the bandit setting, a regret bound for any strategy against the completely adaptive adversary is $\Omega(T)$. 
Thus, it needs to assume that the adversary is either adaptive or oblivious, i.e., the adversary must choose the loss function before observing the player's choice or before the start of the game, respectively. 
In any case, the expected regret of Flaxman's algorithm is $O(\sqrt{n}T^{3/4})$. 
The dependence on $T$ was improved to $O(\text{poly}(n)\sqrt{T}\log{T})$ by Bubeck and Eldan \yrcite{Bubeck2016,bubeck2017} and Lattimore \yrcite{lattimore2020} with the price of increasing the dimension-dependence. 
    
Better regret can be achieved if the player can query the value of the loss function at more than one point in each round. 
In 2010, Agarwal et al. \yrcite{Agarwal2010} considered the multi-query bandit setting, and proposed an algorithm with an expected regret bound of $O(n^2\sqrt{T})$, where the player queries $O(1)$ points in each round. 
In 2017, the upper bound was improved to $O(\sqrt{nT/k})$ by Shamir \yrcite{Shamir2017}, where the player queries $k$ points in each round. 
Although much effort has been made to minimize the impact of the dimension, but there is still a polynomial dependence on it, until the introducing of quantum techniques. In 2022, He et al. \yrcite{he2022quantum} gave a quantum subgradient descent method that achieves $O(\sqrt{T})$ regret without dependence of the dimension $n$. 

For works of the general convex loss functions setting mentioned above, it seems difficult to achieve regret logarithmic in $T$ even with the help of quantum computers. 
But for specific loss functions, achieving logarithmic regret is possible. For strongly convex loss functions setting, $O(\log{T})$ regret could be achieved by using online gradient descent with full information feedback \cite{Hazan2007} and by using quantum online subgradient descent with zeroth-order feedback \cite{he2022quantum}. 
For exp-concave loss functions setting, Hazan et al. \yrcite{Hazan2007} showed that $O(n\log{T})$ regret could be achieved by using online Newton step method with full information feedback. 
The online Newton step method requires a generalized projection. For the full information feedback or first-order feedback setting, such projection can be avoided by using barrier regularizer for some cases \cite{abernethy2012interior,mhammedi2022damped,mhammedi2023quasi}.
But the setting of exp-concave loss functions with zeroth-order feedback is still an open problem.
To the best of our best knowledge, the classical online Newton method with zeroth-order feedback only achieves $O(nT^{2/3}\log T)$ regret\cite{liu2018online}. 

Table \ref{tab:RB} provides a comprehensive summary for various works in online convex optimization.

\begin{table}[htbp]
    \centering
    \begin{threeparttable}
    \caption{Regret bound for online convex optimization with different settings.}
    \begin{tabular}{l|l|c|c|c}
        \hline
         & Paper & Feedback & Adversary & Regret \\
        \hline
        \multicolumn{5}{c}{general convex loss functions} \\
        \hline
        \multirow{2}{*}{1} & \cite{Zinkevich03} & Full information (first-order oracles) & Completely & $O\left(DG\sqrt{T}\right)$ \\
        & \cite{Hazan16} & Full information (first-order oracles) & Completely  & $\Omega\left(DG\sqrt{T}\right)$ \\
        \hline
        \multirow{4}{*}{2} & \cite{Flaxman2005} & Single query (zeroth-order oracles) & Oblivious & $O\left(\sqrt{DGCn}T^{3/4}\right)$ \\
        & \cite{Bubeck2016} & Single query (zeroth-order oracles) & Oblivious & $O\left(DGn^{11} \sqrt{T}\log T \right)$ \\
        & \cite{bubeck2017} & Single query (zeroth-order oracles) & Oblivious & $O\left(DGn^{9.5} \sqrt{T}\log T \right)$ \\
        & \cite{lattimore2020} & Single query (zeroth-order oracles) & Oblivious & $O\left(DGn^{2.5} \sqrt{T}\log T \right)$ \\
        \hline
        \multirow{4}{*}{3} & \cite{Agarwal2010} & $O(1)$ queries (zeroth-order oracles) & Adaptive & $O\left((D^2+n^2G^2)\sqrt{T}\right)$ \\
        & \cite{Shamir2017} & $k$ queries (zeroth-order oracles) & Adaptive  & $O\left(DG\sqrt{nT/k}\right)$ \\
        & \cite{he2022quantum} & $O(n)$ queries (zeroth-order oracles) & Completely & $O\left(DG\sqrt{T}\right)$ \\
        & \cite{he2022quantum} & $O(1)$ queries (quantum zeroth-order oracles) & Completely & $O\left(DG\sqrt{T}\right)$ \\
        \hline
        \multicolumn{5}{c}{ $\alpha$-strongly convex function } \\
        \hline
        \multirow{3}{*}{4} & \cite{Hazan2007} & Full information (first-order oracles) & Completely & $O\left(G^2 \log T\right)$ \\
        & \cite{Shamir2013} & Single query (zeroth-order oracles) & Completely & $\Omega\left(n \sqrt{T}\right)$ \\
        & \cite{he2022quantum} & $O(1)$ queries (quantum zeroth-order oracles) & Completely & $O\left(G^2 \log T\right)$ \\
        \hline
        \multicolumn{5}{c}{ $\alpha$-exp-concave functions } \\
        \hline
        \multirow{3}{*}{5} & \cite{Hazan2007} & Full information (first-order oracles) & Completely & $O\left(DGn \log T\right)$ \\
        & \cite{liu2018online} & Single queries (zeroth-order oracles) & Oblivious & $O\left(DGn T^{2/3} \log T\right)$ \\
        & This work & $O(1)$ queries (quantum zeroth-order oracles) & Completely & $O\left(DGn \log (T+\log T)\right)$\\
        \hline
    \end{tabular}
    \begin{tablenotes}
        \item [-] Regret bound for online convex optimization with different settings. For a strategy, the less feedback information that the player uses, the better; the stronger adversary the player faces, the better. The full information feedback model reveals the most information about the function, while the single-query feedback model reveals the least. The completely adaptive adversary is strongest, thus the corresponding models are the least restrictive ones in using, while the oblivious adversary is weakest, thus the corresponding models are the most stringent ones in using.
        \item[-] $D$ is the diameter upper bound of  the feasible set, $G$ Lipschitz constant, $C$ is the diameter upper bound of the range of loss functions, $n$ is the dimension of the feasible set, $T$ is the horizon.
    \end{tablenotes}
    \label{tab:ERB}
    \end{threeparttable}
\end{table}

\subsection{Extended Related Works of Quantum Optimization}
\label{Subsec:ERWQO}

Optimization problem is one of the most important problems in science and engineering.
Since quantum computing has shown its advantages over classical computing, e.g., the famous Shor algorithm \cite{Shor1999}, Grover algorithm \cite{Grover1996} and HHL algorithm \cite{Harrow2009,Harrow2017}, people have been exploring how to accelerate the optimization process using quantum computing techniques.
In recent years, some significant advantages have been made on quantum algorithms for optimization problems. We classify these quantum optimization works in four classes: whether the problems are continuous or discrete, and whether the problems are online or offline.

\noindent \textbf{Offline discrete optimization.} This kind of optimization problems are also called combinatorial optimization, which was shown to be acceleratable by using quantum techniques such as Grover's algorithm or quantum walks \cite{Grover1996,Ambainis2006,Durr2006,Durr1996,Mizel2009,Yoder2014,Sadowski2015,He2020}. In the NISQ area, series of variational quantum algorithms were developed to solve combinatorial optimization, readers are referred to \cite{blekos2023review} for more information.

\noindent \textbf{Offline continuous optimization.} When the optimization problems are continuous, usually one can device efficient algorithms because continuous problems tend to have more properties to exploit such as the gradient information.  
In recent years, some quantum improvements were achieved for offline continuous optimization in linear programming \cite{KP18,Li2019,Van2019}, second-order cone programming \cite{Kerenidis2019s,Kerenidis2019Svm,Kerenidis2019P}, quadratic programming \cite{kerenidis2020}, polynomial optimization \cite{Rebentrost2019}, semi-definite optimization \cite{KP18,Joran19,Fernando17,Fernando19,Joran17}, and general convex optimization \cite{DeWolf18,Childs18}. 
For the stochastic convex bandits problem, \cite{li2022quantum} gave a quantum algorithm that achieved exponential speedup in $T$ with the price of increasing the dimension-dependence.
Quantum techniques such as quantum Fourier transform, HHL-like algorithms and QRAM play an important role in offline continuous optimization.

\noindent \textbf{Online discrete optimization.} Recently, people are considering how to apply quantum computing methods to online optimization problems. 
The online discrete optimization problem is best known as the multi-arm bandit problem. 
In 2020, two quantum algorithms for the best arm identification problem, a central problem in multi-armed bandit, were given \cite{casale2020, wang2020}. 
They showed that with the help of quantum coherent bandit oracle, quantum speedup can be achieved in the best arm identification problem. 
Later, for multi-armed bandits and stochastic linear bandits problems (the later problem is continuous), exponentially improving for the dependence in $T$ was shown using a weaker quantum bandit oracle \cite{wan2023quantum}. 
Quantum amplitude amplification and quantum Monte Carlo method play a central role in the study of quantum multi-arm bandit.

\noindent \textbf{Online continuous optimization.} In online continuous optimization, it is usually assumed that there is an adversary who is responsible for choosing the loss functions in each round. Thus, it is often called the adversarial optimization problems. 
In 2022, \cite{he2022quantum} studied the quantum algorithm of online convex optimization, which achieves a regret bound that is independent of the dimension so that outperforms the optimal classical algorithms in the same setting which still have polynomial dependence of the dimension.
For the online portfolio optimization problem, \cite{lim2022quantum} gave a quantum algorithm that provides a quadratic speedup in the time complexity by using techniques such as quantum state preparation, inner product estimation and multi-sampling.

\section{Proof detail}
\label{lab:SecProof}

\subsection{Proof of Lemma \ref{QGB}}
\label{proof:QGE}

We first provide some basic definitions of quantum gates/circuits \cite{Nielsen2002}, as follows:
\begin{itemize}
    \item Hadamard gate: $H\ket{0}:=\frac{\ket{0}+\ket{1}}{\sqrt{2}}$, $H\ket{1}:=\frac{\ket{0}-\ket{1}}{\sqrt{2}}$,
    \item  modulo add: $+\ket{u}\ket{v}:=\ket{u}\ket{(u+v)\mod 2^c}$,
    \item Inverse quantum Fourier transform: $QFT^{-1}\ket{v}:=\frac{1}{2^b}\sum_{v\in\{-2^{b-1},\dots,2^{b-1}\}}e^{-\frac{2\pi i vw}{2^b}}\ket{w}, \forall w \in \{-2^{b-1},\dots,2^{b-1}\}$.
\end{itemize}

The oracle $Q_{F_t}$, defined as $Q_{F_t}\ket{u}\ket{0}:=\ket{u}\ket{0+F_t(u)}$, where $F_t(u)=\cfrac{2^b}{2Gr_t} \left [f_t \left (x_t+\cfrac{r_t}{2^b} \left (u-\cfrac{2^b}{2}\mathbbm{1} \right ) \right )-f_t(x_t) \right ]$, can be construct by two $Q_{f_t}$, as follows:
\begin{align}
    & \ket{u}\ket{x_t}\ket{0}\ket{0}\ket{0}\ket{0}\ket{0} \nonumber \\
    \longrightarrow &  \ket{u}\ket{x_t}\ket{x_t+\cfrac{r_t}{2^b} \left (u-\cfrac{2^b}{2}\mathbbm{1} \right )}\ket{0}\ket{x_t}\ket{0}\ket{0} \nonumber \\
    \stackrel{\text{two } Q_{f_t}}\longrightarrow &  \ket{u}\ket{x_t}\ket{x_t+\cfrac{r_t}{2^b} \left (u-\cfrac{2^b}{2}\mathbbm{1} \right )}\ket{f_t \left (x_t+\cfrac{r_t}{2^b} \left (u-\cfrac{2^b}{2}\mathbbm{1} \right ) \right )}\ket{x_t}\ket{f_t(x_t)}\ket{0} \nonumber \\
    \longrightarrow &  \ket{u}\ket{x_t}\ket{x_t+\cfrac{r_t}{2^b} \left (u-\cfrac{2^b}{2}\mathbbm{1} \right )}\ket{f_t \left (x_t+\cfrac{r_t}{2^b} \left (u-\cfrac{2^b}{2}\mathbbm{1} \right ) \right )}\ket{x_t}\ket{f_t(x_t)}\ket{F_t(u)},
\end{align}
and then uncompute all of the auxiliary registers.

Now we are ready to prove Lemma \ref{QGB}.

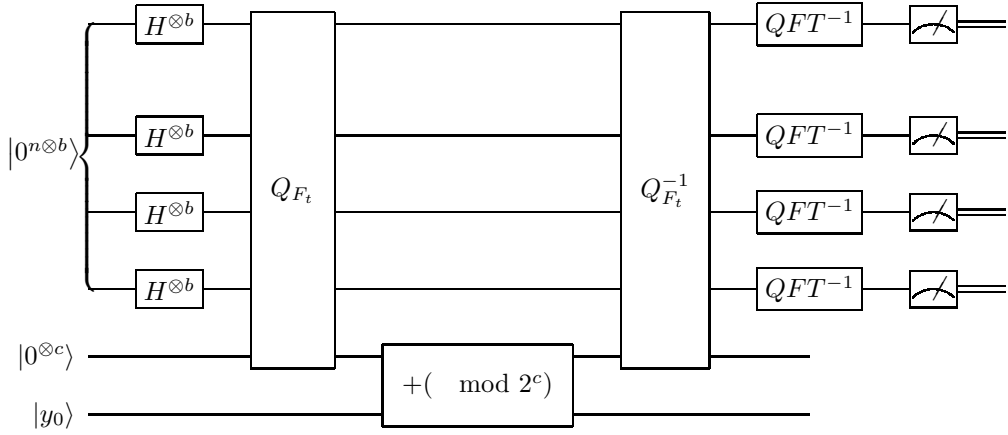
\begin{figure*}[htp]
\begin{align*}
    \Qcircuit @C=1.8em @R=1.3em {
        \lstick{} & \gate{H^{\otimes b}} & \multigate{5}{ \ Q_{F_t} \ } & \qw                          & \multigate{5}{ \ Q_{F_t}^{-1} \ } & \gate{QFT^{-1}} & \meter & \cw\\
        \\
        \lstick{} & \gate{H^{\otimes b}} & \ghost{ \ Q_{F_t} \ }        & \qw                          & \ghost{ \ Q_{F_t}^{-1} \ }        & \gate{QFT^{-1}} & \meter & \cw \\
        \lstick{} & \gate{H^{\otimes b}} & \ghost{ \ Q_{F_t} \ }        & \qw                          & \ghost{ \ Q_{F_t}^{-1} \ }        & \gate{QFT^{-1}} & \meter & \cw \\
        \lstick{} & \gate{H^{\otimes b}} & \ghost{ \ Q_{F_t} \ }        & \qw                          & \ghost{ \ Q_{F_t}^{-1} \ }        & \gate{QFT^{-1}} & \meter & \cw
            \inputgroupv{1}{5}{.1em}{4.9em}{\ket{0^{n\otimes b}}} \\
        \lstick{\ket{0^{\otimes c}}} & \qw & \ghost{ \ Q_{F_t} \ }      & \multigate{1}{\ + (\mod 2^c) \ }& \ghost{ \ Q_{F_t}^{-1} \ }        & \qw  \\
        \lstick{\ket{y_0}} & \qw         & \qw                      & \ghost{\ + (\mod 2^c) \ }       & \qw                           & \qw  
    }
\end{align*}
\caption{Quantum gradient estimation.  \label{QGEC}}
\end{figure*}

\QGB*

\begin{proof}
    We omit the subscript $t$ in the proof as the lemma holds for each timestep $t$. 
    The states after Step $4$ will be:
    \begin{align}
        \frac{1}{\sqrt{2^n}}\sum_{a\in\{0,1,\dots,2^n-1\}}e^{\frac{2\pi i a}{2^n}} \ket{0^{\otimes b},0^{\otimes b},\dots,0^{\otimes b}} \ket{0^{\otimes c}} \ket{a}.
    \end{align}
    After Step $5$:
    \begin{align}
       \frac{1}{\sqrt{2^{bn+c}}}\sum_{u_1,u_2,\dots,u_n\in\{0,1,\dots,2^b-1\}} 
       \sum_{a\in\{0,1,\dots,2^c-1\}} 
       e^{\frac{2\pi i a}{2^n}} 
       \ket{u_1,u_2,\dots,u_n} 
       \ket{0^{\otimes c}} \ket{a}. 
    \end{align}
    After Step $6$:
    \begin{align}
       \frac{1}{\sqrt{2^{bn+c}}}\sum_{u_1,u_2,\dots,u_n\in\{0,1,\dots,2^b-1\}} 
       \sum_{a\in\{0,1,\dots,2^c-1\}} 
       e^{\frac{2\pi i a}{2^n}}   
       \ket{u_1,u_2,\dots,u_n} 
       \ket{F(u)} \ket{a}. 
    \end{align}
    After Step $7$:
    \begin{align}
       \frac{1}{\sqrt{2^{bn+c}}}\sum_{u_1,u_2,\dots,u_n\in\{0,1,\dots,2^b-1\}} 
       \sum_{a\in\{0,1,\dots,2^c-1\}} 
       e^{2\pi i F(u)}
       e^{\frac{2\pi i a}{2^n}} 
       \ket{u_1,u_2,\dots,u_n} 
       \ket{F(u)} \ket{a}. 
    \end{align}
    After Step $8$:
    \begin{align}
       \frac{1}{\sqrt{2^{bn+c}}}\sum_{u_1,u_2,\dots,u_n\in\{0,1,\dots,2^b-1\}} 
       \sum_{a\in\{0,1,\dots,2^c-1\}} 
       e^{2\pi i F(u)}
       e^{\frac{2\pi i a}{2^n}} 
       \ket{u_1,u_2,\dots,u_n} 
       \ket{0^{\otimes c}} \ket{a}. 
    \end{align}
    In the following, the last two registers will be omitted:
    \begin{align}
       \frac{1}{\sqrt{2^{bn}}}\sum_{u_1,u_2,\dots,u_n\in\{0,1,\dots,2^b-1\}} 
       e^{2\pi i F(u)}
       \ket{u_1,u_2,\dots,u_n} .
    \end{align}
    And then we simply relabel the state by changing $u \to v=u-\frac{2^b}{2}$:
    \begin{align}
        \label{phi}
        \frac{1}{\sqrt{2^{bn}}}\sum_{v_1,v_2,\dots,v_n\in\{-2^{b-1},-2^{b-1}+1,\dots,2^{b-1}\}} 
        e^{2\pi i F(v)}
        \ket{v}. 
    \end{align}
    We denote Formula (\ref{phi}) as $\ket{\phi}$. Let $g=\nabla f(x)$, and consider the idealized state
    \begin{align}
        \label{psi}
        \ket{\psi}=\frac{1}{\sqrt{2^{bn}}}\sum_{v_1,v_2,\dots,v_n\in\{-2^{b-1},-2^{b-1}+1,\dots,2^{b-1}\}} 
        e^{\frac{2\pi i g \cdot v}{2G}}
        \ket{v}.
    \end{align}
    After Step $10$, from the analysis of phase estimation \cite{Brassard00}:
    \begin{align}
        \Pr[\abs{\frac{Ng_i}{2G}-m_i}>e]<\frac{1}{2(e-1)}, \forall{i\in [n]}.
    \end{align}
    Let $e=n/\rho +1$, where $1\geq \rho>0$. We have
    \begin{align}
        \Pr[\abs{\frac{Ng_i}{2G}-m_i}>n/\rho +1]<\frac{\rho}{2n}, \forall{i\in [n]}.
    \end{align}
    Note that the difference in the probabilities of measurement on $\ket{\phi}$ and $\ket{\psi}$ can be bounded by the trace distance between the two density matrices:
    \begin{align}
        \|\dyad{\phi}{\phi}-\dyad{\psi}{\psi}\|_1
        =2\sqrt{1-|\braket{\phi}{\psi}|^2}
        \leq 2\|\ket{\phi}-\ket{\psi}\|.
    \end{align}
    Since $f$ is $\beta$-smooth, we have
    \begin{align}
        F(v) & \leq \frac{2^b}{2Gr}[f(x+\frac{r v}{N})-f(x)]+\frac{1}{2^{c+1}} \nonumber\\
        & \leq \frac{2^b}{2Gr}[\frac{r}{2^b} g \cdot v +\frac{\beta (r v)^2}{2^{2b}}]
        +\frac{1}{2^{c+1}} \nonumber\\
        & \leq \frac{g \cdot v}{2G}+\frac{2^b \beta r n}{4G}+\frac{1}{2^{c+1}}.
    \end{align}
    Then,
    \begin{align}
        \|\ket{\phi}-\ket{\psi}\|^2 & = \frac{1}{2^{bn}} \sum_v | e^{2 \pi i F(v)} - e^{\frac{2 \pi i g \cdot v}{2G}} |^2 \nonumber \\
        & \leq \frac{1}{2^{bn}} \sum_v | 2 \pi i F(v) - \frac{2 \pi i g \cdot v}{2G} |^2 \nonumber \\
        & \leq \frac{1}{2^{bn}} \sum_v 4 \pi^2 (\frac{2^b \beta r n}{4G}+\frac{1}{2^{c+1}})^2.
    \end{align}
    Set $b=\log_2 \frac{G\rho}{4\pi n^2 \beta r}$, $c=\log_2{\frac{4G}{2^b n \beta r }}-1$.  We have
    \begin{align}
        \|\ket{\phi}-\ket{\psi}\|^2 \leq \frac{\rho^2}{16n^2},
    \end{align}
    which implies $\|\dyad{\phi}{\phi}-\dyad{\psi}{\psi}\|_1 \leq \frac{\rho}{2n}$. Therefore, by the union bound,
    \begin{align}
        \Pr[\abs{\frac{2^b g_i}{2G}-m_i}>n/\rho +1]<\frac{\rho}{n}, \forall{i\in [n]}.
    \end{align}
    Furthermore, there is 
    \begin{align}
        \Pr[\abs{g_i-\widetilde{\nabla}_i f(x)}>\frac{2G(n/\rho +1)}{2^b}]<\frac{\rho}{n}, \forall{i\in [n]},
    \end{align}
    as $b=\log_2 \frac{G\rho}{4\pi n^2 \beta r}$, we have
    \begin{align}
        \Pr[\abs{g_i-\widetilde{\nabla}_i f(x)}>8 \pi n^2 (n/\rho +1) \beta r/\rho]<\frac{\rho}{n}, \forall{i\in [n]}.
    \end{align}
    By the union bound, we have
    \begin{align}
        \Pr[\|{g-\widetilde{\nabla} f(x)}\|_1>8 \pi n^3 (n/\rho +1) \beta r/\rho]<\rho,
    \end{align}
    which gives the lemma.
\end{proof}

\subsection{Proof of Lemma \ref{QNSGB}}
\label{proof:QNSGB}
\QNSGB*
\begin{proof}
    Note that we omit the subscript $t$ in the proofs of the lemmas as they hold for each timestep $t$. 
    By the definition of $\alpha$-exp-concave, $e^{-\alpha f(x)}$ is a concave function. 
    For $\eta\leq \frac{\alpha}{2}$, $e^{-2\eta f(x)}$ is a concave function as well. 
    By the definition of concave functions, for any $y\in \mathcal{K}$, we have
    \begin{align}
        e^{-2\eta f(y)} & \leq e^{-2\eta f(x)} + {\nabla \left(e^{-2\eta f(x)}\right)} ^{\mathrm{T}} (y-x) \nonumber \\
        & = e^{-2\eta f(x)} + -2\eta e^{-2\eta f(x)} \nabla f(x) ^{\mathrm{T}} (y-x),
    \end{align}
    Simplifying gives
    \begin{align}
        \label{equ:expconcave}
        f(y) & \geq f(x) - \frac{1}{2\eta} \log \left(1-2\eta\nabla f(x)^{\mathrm{T}}(y-x)\right) \nonumber \\
        & \geq f(x) +  \nabla f(x)^{\mathrm{T}}(y-x)+\frac{\eta}{2} (y-x)^{\mathrm{T}} \nabla f(x) \nabla f(x)^{\mathrm{T}} (y-x), 
    \end{align}
    where the last inequality comes from $-\log(1-z)\geq z+ \frac{z^2}{4}$ for any $\abs{z}\leq 1$.
    Further, for the second term on the right hand side of Inequality (\ref{equ:expconcave}),
    \begin{align}
        \nabla f(x)^{\mathrm{T}}(y-x) & = \nabla f(x)^{\mathrm{T}}(y-x) + \widetilde{\nabla} f(x)^{\mathrm{T}}(y-x) -\widetilde{\nabla} f(x)^{\mathrm{T}}(y-x) \nonumber \\
        & = \widetilde{\nabla} f(x)^{\mathrm{T}}(y-x) + \left(\nabla f(x)-\widetilde{\nabla} f(x)\right)^{\mathrm{T}}(y-x) \nonumber \\
        & \geq \widetilde{\nabla} f(x)^{\mathrm{T}}(y-x) -\norm{\nabla f(x)-\widetilde{\nabla} f(x)}_1 \norm{y-x}_{\infty} \nonumber \\
        & \geq \widetilde{\nabla} f(x)^{\mathrm{T}}(y-x) -\norm{\nabla f(x)-\widetilde{\nabla} f(x)}_1 \norm{y-x}_2
    \end{align}
    Since $\norm{y-x}_2\leq D$, we have
    \begin{align}
        \nabla f(x)^{\mathrm{T}}(y-x) \geq \widetilde{\nabla} f(x)^{\mathrm{T}}(y-x) - D \norm{\nabla f(x)-\widetilde{\nabla} f(x)}_1.
    \end{align}
    For the third term on the right hand side of Inequality (\ref{equ:expconcave}),
    \begin{align}
        \label{equ:tvggv}
        & \frac{\eta}{2} (y-x)^{\mathrm{T}} \nabla f(x) \nabla f(x)^{\mathrm{T}} (y-x) \nonumber \\
        = & \frac{\eta}{2} (y-x)^{\mathrm{T}} \nabla f(x) \nabla f(x)^{\mathrm{T}} (y-x) + \frac{\eta}{2} (y-x)^{\mathrm{T}} \widetilde{\nabla} f(x) \widetilde{\nabla} f(x)^{\mathrm{T}} (y-x) \nonumber \\
        & - \frac{\eta}{2} (y-x)^{\mathrm{T}} \widetilde{\nabla} f(x) \widetilde{\nabla} f(x)^{\mathrm{T}} (y-x) \nonumber \\
        = & \frac{\eta}{2} (y-x)^{\mathrm{T}} \widetilde{\nabla} f(x) \widetilde{\nabla} f(x)^{\mathrm{T}} (y-x) +\frac{\eta}{2} (y-x)^{\mathrm{T}} \nabla f(x) \nabla f(x)^{\mathrm{T}} (y-x) \nonumber \\
        & - \frac{\eta}{2} (y-x)^{\mathrm{T}} \nabla f(x) \widetilde{\nabla} f(x)^{\mathrm{T}} (y-x) + \frac{\eta}{2} (y-x)^{\mathrm{T}} \nabla f(x) \widetilde{\nabla} f(x)^{\mathrm{T}} (y-x) \nonumber \\
        & - \frac{\eta}{2} (y-x)^{\mathrm{T}} \widetilde{\nabla} f(x) \widetilde{\nabla} f(x)^{\mathrm{T}} (y-x) \nonumber \\
        = & \frac{\eta}{2} (y-x)^{\mathrm{T}} \widetilde{\nabla} f(x) \widetilde{\nabla} f(x)^{\mathrm{T}} (y-x) + \frac{\eta}{2} (y-x)^{\mathrm{T}} \nabla f(x) \left(\nabla f(x) - \widetilde{\nabla} f(x)\right)^{\mathrm{T}} (y-x) \nonumber \\
        & + \frac{\eta}{2} (y-x)^{\mathrm{T}} \left(\nabla f(x)-\widetilde{\nabla} f(x) \right) \widetilde{\nabla} f(x)^{\mathrm{T}} (y-x) \nonumber \\
        \geq & \frac{\eta}{2} (y-x)^{\mathrm{T}} \widetilde{\nabla} f(x) \widetilde{\nabla} f(x)^{\mathrm{T}} (y-x) - \frac{\eta}{2} \norm{y-x}_2 \norm{\nabla f(x)}_2 \norm{\nabla f(x) - \widetilde{\nabla} f(x)}_1 \norm{y-x}_{\infty} \nonumber \\
        & - \frac{\eta}{2} \norm{y-x}_{\infty} \norm{\nabla f(x) - \widetilde{\nabla} f(x)}_1 \norm{\widetilde{\nabla} f(x)}_2 \norm{y-x}_2 \nonumber \\
        = & \frac{\eta}{2} (y-x)^{\mathrm{T}} \widetilde{\nabla} f(x) \widetilde{\nabla} f(x)^{\mathrm{T}} (y-x) - \frac{\eta}{2} \norm{y-x}_2 \norm{\nabla f(x)}_2 \norm{\nabla f(x) - \widetilde{\nabla} f(x)}_1 \norm{y-x}_{\infty} \nonumber \\
        & - \frac{\eta}{2} \norm{y-x}_{\infty} \norm{\nabla f(x) - \widetilde{\nabla} f(x)}_1 \norm{\widetilde{\nabla} f(x)-\nabla f(x)+\nabla f(x)}_2 \norm{y-x}_2 \nonumber \\
        \geq & \frac{\eta}{2} (y-x)^{\mathrm{T}} \widetilde{\nabla} f(x) \widetilde{\nabla} f(x)^{\mathrm{T}} (y-x) - \frac{\eta}{2} \norm{y-x}_2 \norm{\nabla f(x)}_2 \norm{\nabla f(x) - \widetilde{\nabla} f(x)}_1 \norm{y-x}_{\infty} \nonumber \\
        & - \frac{\eta}{2} \norm{y-x}_{\infty} \norm{\nabla f(x) - \widetilde{\nabla} f(x)}_1 \left(\norm{\widetilde{\nabla} f(x)-\nabla f(x)}_2 + \norm{\nabla f(x)}_2 \right) \norm{y-x}_2 \nonumber \\
        \geq & \frac{\eta}{2} (y-x)^{\mathrm{T}} \widetilde{\nabla} f(x) \widetilde{\nabla} f(x)^{\mathrm{T}} (y-x) - \eta \norm{y-x}_2 \norm{\nabla f(x)}_2 \norm{\nabla f(x) - \widetilde{\nabla} f(x)}_1 \norm{y-x}_{\infty} \nonumber \\
        & - \frac{\eta}{2} \norm{y-x}_{\infty} \norm{\nabla f(x) - \widetilde{\nabla} f(x)}^2_1  \norm{y-x}_2 \nonumber \\
        \geq & \frac{\eta}{2} (y-x)^{\mathrm{T}} \widetilde{\nabla} f(x) \widetilde{\nabla} f(x)^{\mathrm{T}} (y-x) - \eta \norm{y-x}_2 \norm{\nabla f(x)}_2 \norm{\nabla f(x) - \widetilde{\nabla} f(x)}_1 \norm{y-x}_2 \nonumber \\
        & - \frac{\eta}{2} \norm{y-x}_2 \norm{\nabla f(x) - \widetilde{\nabla} f(x)}^2_1  \norm{y-x}_2 \nonumber \\
        = & \frac{\eta}{2} (y-x)^{\mathrm{T}} \widetilde{\nabla} f(x) \widetilde{\nabla} f(x)^{\mathrm{T}} (y-x) - \eta  \norm{\nabla f(x)}_2 \norm{\nabla f(x) - \widetilde{\nabla} f(x)}_1 \norm{y-x}^2_2 \nonumber \\
        & - \frac{\eta}{2}  \norm{\nabla f(x) - \widetilde{\nabla} f(x)}^2_1  \norm{y-x}^2_2.
    \end{align}
    Since $\eta \leq \frac{1}{2DG},\norm{y-x}_2\leq D,\norm{\nabla f(x)}_2\leq G$, we have
    \begin{align}
        \frac{\eta}{2} (y-x)^{\mathrm{T}} \nabla f(x) \nabla f(x)^{\mathrm{T}} (y-x) \geq & \frac{\eta}{2} (y-x)^{\mathrm{T}} \widetilde{\nabla} f(x) \widetilde{\nabla} f(x)^{\mathrm{T}} (y-x) \nonumber \\
        & - \frac{D}{8} \norm{\nabla f(x) - \widetilde{\nabla} f(x)}_1 - \frac{D}{16G} \norm{\nabla f(x) - \widetilde{\nabla} f(x)}^2_1.
    \end{align}
    In summary, we have
    \begin{align}
        f_t(y)\geq & f_t(x_t)+\widetilde{\nabla} f_t(x_t)^{\mathrm{T}}(y-x_t)+\frac{\eta_t}{2}(y-x_t)^{\mathrm{T}} \widetilde{\nabla} f_t(x_t) \widetilde{\nabla} f_t(x_t)^{\mathrm{T}}(y-x_t) \nonumber \\
        & - \frac{9D}{8} \norm{\nabla f(x) - \widetilde{\nabla} f(x)}_1 - \frac{D}{16G} \norm{\nabla f(x) - \widetilde{\nabla} f(x)}^2_1.
    \end{align}
    which gives the lemma.
\end{proof}

\subsection{Proof of Theorem \ref{TNQ}}
\label{proof:TNQ}
\TNQ*
\begin{proof}
    By Lemma \ref{QGB} and Lemma \ref{QNSGB}, for any $y\in \mathcal{K}$ and any $\eta_t\leq \frac{1}{2} \min \left\{\frac{1}{GD},\alpha\right\}$, in any time step $t$, with probability greater than $1-\rho_t$, the gradient $\widetilde{\nabla} f_t(x_t)$ estimated by Algorithm \ref{QONS} satisfied
    \begin{align}
        \label{QNST1}
        f_t(y)\geq & f_t(x_t)+\widetilde{\nabla} f_t(x_t)^{\mathrm{T}}(y-x_t)+\frac{\eta_t}{2}(y-x_t)^{\mathrm{T}} \widetilde{\nabla} f_t(x_t) \widetilde{\nabla} f_t(x_t)^{\mathrm{T}}(y-x_t) \nonumber \\
        &-9D \pi n^3 (n/\rho_t +1) \beta r_t/\rho_t -\frac{4D ( \pi n^3 (n/\rho_t +1) \beta r_t/\rho_t )^2}{G} .
    \end{align}
    Inequality (\ref{QNST1}) is required to hold for all T rounds. 
    Let $B_t$ be the event that Algorithm \ref{QONS} fails to satisfy Inequality (\ref{QNST1}) in the $t$-th round.
    First, set the failure rate of each round to be the same, specifically, equal to $\rho$. Then by Lemma \ref{QNSGB}, we have $\Pr(B_1)=\Pr(B_2)=\dots=\Pr(B_T)\leq \rho$. 
    By the union bound (that is, for any finite or countable event set, the probability that at least one of the events happens is no greater than the sum of the probabilities of the events in the set), we have $\Pr(\cup_{t=1}^T B_t) \leq \sum_{t=1}^{T} \Pr(B_t) \leq T\rho$.
    Namely, the probability that Algorithm \ref{QONS} fails to satisfy Inequality (\ref{QNST1}) at least one round is less than $T\rho$, which means the probability that Algorithm \ref{QONS} succeeds for all $T$ round is greater than $1-T\rho$.
    Let $x^* \in \arg\min_{x\in \mathcal{K}} \sum_{t=1}^{T}f_t(x)$. for the fixed $y=x^*$, for all $t\in [T]$ with probability $1-T\rho$, we have
    \begin{align}
        \label{ine:T4_2}
        f_t(x_t)-f_t(x^*) & \leq \widetilde{\nabla} f_t(x_t)^{\mathrm{T}}(x_t-x^*)-\frac{\eta_t}{2}(x_t-x^*)^{\mathrm{T}} \widetilde{\nabla} f_t(x_t) \widetilde{\nabla} f_t(x_t)^{\mathrm{T}}(x_t-x^*) \nonumber \\
        &+9D \pi n^3 (n/\rho +1) \beta r_t/\rho +\frac{4D ( \pi n^3 (n/\rho +1) \beta r_t/\rho )^2}{G} .
    \end{align}
    By the update rule for $x_{t+1}$ and the Pythagorean theorem, there is
    \begin{align}
        \label{QONSTUR}
        \norm{x_{t+1}-x^*}^2_{A_t} & =\norm{\hat{P}^{(A_t)}_{\mathcal{K}}\left(x_t-\frac{1}{\eta_t} A_t^{-1} \widetilde{\nabla} f_t(x_t)\right)-x^*}^2_{A_t} \nonumber \\
        & \leq \norm{x_t-\frac{1}{\eta_t} A_t^{-1} \widetilde{\nabla} f_t(x_t)-x^*}^2_{A_t} \nonumber \\
        & \leq (x_t-x^*)^{\mathrm{T}}A_t (x_t-x^*)-\frac{2}{\eta_t} \widetilde{\nabla} f_t(x_t)^{\mathrm{T}} (x_t-x^*)+\frac{1}{\eta^2_t} \widetilde{\nabla} f_t(x_t)^{\mathrm{T}} A^{-1}_t \widetilde{\nabla} f_t(x_t).
    \end{align}
    Since $\norm{x_{t+1}-x^*}^2_{A_t}=(x_{t+1}-x^*)^{\mathrm{T}}A_t (x_{t+1}-x^*)$, combine with Inequality (\ref{QONSTUR}), we have
    \begin{align}
        \label{QONSFB}   
        \widetilde{\nabla} f_t(x_t)^{\mathrm{T}}(x_t-x^*) \leq & \frac{1}{2\eta_t} \widetilde{\nabla} f_t(x_t)^{\mathrm{T}} A^{-1}_t \widetilde{\nabla} f_t(x_t) + \frac{\eta_t}{2}(x_t-x^*)^{\mathrm{T}}A_t (x_t-x^*) \nonumber \\
        & -\frac{\eta_t}{2}(x_{t+1}-x^*)^{\mathrm{T}}A_t (x_{t+1}-x^*).
    \end{align}
    summing Inequality (\ref{QONSFB}) from $t=1$ to $T$, we have
    \begin{align}
        \sum_{t=1}^T \widetilde{\nabla} f_t(x_t)^{\mathrm{T}}(x_t-x^*) & \leq \sum_{t=1}^T \frac{1}{2\eta_t}\widetilde{\nabla} f_t(x_t)^{\mathrm{T}} A^{-1}_t \widetilde{\nabla} f_t(x_t) + \sum_{t=1}^T\frac{\eta_t}{2}(x_t-x^*)^{\mathrm{T}}A_t (x_t-x^*) \nonumber \\
        & \quad -\sum_{t=1}^T \frac{\eta_t}{2}(x_{t+1}-x^*)^{\mathrm{T}}A_t (x_{t+1}-x^*) \nonumber \\
        & =  \sum_{t=1}^T \frac{1}{2\eta_t} \widetilde{\nabla} f_t(x_t)^{\mathrm{T}} A^{-1}_t \widetilde{\nabla} f_t(x_t) + \frac{\eta_1}{2}(x_1-x^*)^{\mathrm{T}}A_1 (x_1-x^*) \nonumber \\
        & \quad + \frac{1}{2} \sum_{t=2}^T(x_t-x^*)^{\mathrm{T}}(\eta_t A_t - \eta_{t-1} A_{t-1}) (x_t-x^*) \nonumber \\
        & \quad - \frac{\eta_T}{2} (x_{T+1}-x^*)^{\mathrm{T}}A_T (x_{T+1}-x^*).
    \end{align}
    Set $\eta_1=\eta_2=\dots=\eta_T=\eta$, we have
    \begin{align}
        \label{QONSFT}
        \sum_{t=1}^T \widetilde{\nabla} f_t(x_t)^{\mathrm{T}}(x_t-x^*) & \leq \sum_{t=1}^T \frac{1}{2\eta} \widetilde{\nabla} f_t(x_t)^{\mathrm{T}} A^{-1}_t \widetilde{\nabla} f_t(x_t) + \frac{\eta}{2}(x_1-x^*)^{\mathrm{T}}A_1 (x_1-x^*) \nonumber \\
        & \quad + \frac{\eta}{2} \sum_{t=2}^T(x_t-x^*)^{\mathrm{T}}( A_t -  A_{t-1}) (x_t-x^*) \nonumber \\
        & \quad - \frac{\eta}{2} (x_{T+1}-x^*)^{\mathrm{T}}A_T (x_{T+1}-x^*) \nonumber \\
        & =  \sum_{t=1}^T \frac{1}{2\eta} \widetilde{\nabla} f_t(x_t)^{\mathrm{T}} A^{-1}_t \widetilde{\nabla} f_t(x_t) + \frac{\eta}{2}(x_1-x^*)^{\mathrm{T}}A_1 (x_1-x^*) \nonumber \\
        & \quad + \frac{\eta}{2} \sum_{t=2}^T(x_t-x^*)^{\mathrm{T}}\widetilde{\nabla} f_t(x_t)(\widetilde{\nabla} f_t(x_t))^{\mathrm{T}} (x_t-x^*) \nonumber \\
        & \quad - \frac{\eta}{2} (x_{T+1}-x^*)^{\mathrm{T}}A_T (x_{T+1}-x^*) \nonumber \\
        & =  \sum_{t=1}^T \frac{1}{2\eta} \widetilde{\nabla} f_t(x_t)^{\mathrm{T}} A^{-1}_t \widetilde{\nabla} f_t(x_t) \nonumber \\
        & \quad + \frac{\eta}{2}(x_1-x^*)^{\mathrm{T}} \left(A_1-\widetilde{\nabla} f_1(x_1)(\widetilde{\nabla} f_1(x_1))^T \right) (x_1-x^*) \nonumber \\
        & \quad + \frac{\eta}{2} \sum_{t=1}^T(x_t-x^*)^{\mathrm{T}}\widetilde{\nabla} f_t(x_t)(\widetilde{\nabla} f_t(x_t))^{\mathrm{T}} (x_t-x^*) \nonumber \\
        & \quad - \frac{\eta}{2} (x_{T+1}-x^*)^{\mathrm{T}}A_T (x_{T+1}-x^*) \nonumber \\
        & = \sum_{t=1}^T \frac{1}{2\eta} \widetilde{\nabla} f_t(x_t)^{\mathrm{T}} A^{-1}_t \widetilde{\nabla} f_t(x_t) + \frac{\epsilon \eta}{2}(x_1-x^*)^{\mathrm{T}} I (x_1-x^*) \nonumber \\
        & \quad + \frac{\eta}{2} \sum_{t=1}^T(x_t-x^*)^{\mathrm{T}}\widetilde{\nabla} f_t(x_t)(\widetilde{\nabla} f_t(x_t))^{\mathrm{T}} (x_t-x^*) \nonumber \\
        & \quad - \frac{\eta}{2} (x_{T+1}-x^*)^{\mathrm{T}}A_T (x_{T+1}-x^*) \nonumber \\
        & \leq \sum_{t=1}^T \frac{1}{2\eta} \widetilde{\nabla} f_t(x_t)^{\mathrm{T}} A^{-1}_t \widetilde{\nabla} f_t(x_t) + \frac{\epsilon \eta}{2} \norm{x_1-x^*}_2 \norm{x_1-x^*}_2 \nonumber \\
        & \quad + \frac{\eta}{2} \sum_{t=1}^T(x_t-x^*)^{\mathrm{T}}\widetilde{\nabla} f_t(x_t)(\widetilde{\nabla} f_t(x_t))^{\mathrm{T}} (x_t-x^*) \nonumber \\
        & \quad - \frac{\eta}{2} (x_{T+1}-x^*)^{\mathrm{T}}A_T (x_{T+1}-x^*) \nonumber \\
        & \leq \sum_{t=1}^T \frac{1}{2\eta} \widetilde{\nabla} f_t(x_t)^{\mathrm{T}} A^{-1}_t \widetilde{\nabla} f_t(x_t) + \frac{\epsilon \eta D^2}{2}  \nonumber \\
        & \quad + \frac{\eta}{2} \sum_{t=1}^T(x_t-x^*)^{\mathrm{T}}\widetilde{\nabla} f_t(x_t)(\widetilde{\nabla} f_t(x_t))^{\mathrm{T}} (x_t-x^*) \nonumber \\
        & \quad - \frac{\eta}{2} (x_{T+1}-x^*)^{\mathrm{T}}A_T (x_{T+1}-x^*).
    \end{align}
    For any $t\in [T]$, by the update rule, $A_t=\epsilon I + \sum_{s=1}^t \widetilde{\nabla} f_s(x_s)(\widetilde{\nabla} f_s(x_s))^{\mathrm{T}}$, since
    \begin{align}
        A_t^{\mathrm{T}}=\epsilon I^{\mathrm{T}} + \sum_{s=1}^t \left(\widetilde{\nabla} f_s(x_s)(\widetilde{\nabla} f_s(x_s))^{\mathrm{T}}\right)^{\mathrm{T}}=\epsilon I + \sum_{s=1}^t \widetilde{\nabla} f_s(x_s)(\widetilde{\nabla} f_s(x_s))^{\mathrm{T}} = A_t,
    \end{align}
    Therefore, $A_t$ are symmetric matrices, and for any $z\in \mathbb{R}^n$, 
    \begin{align}
        z^{\mathrm{T}} A_t z & = \epsilon z^{\mathrm{T}} z + \sum_{s=1}^t z^{\mathrm{T}} \widetilde{\nabla} f_s(x_s)(\widetilde{\nabla} f_s(x_s))^{\mathrm{T}} z \nonumber \\
        & = \epsilon z^{\mathrm{T}} z + \sum_{s=1}^t \left(\widetilde{\nabla} f_s(x_s))^{\mathrm{T}} z\right)^{\mathrm{T}}(\widetilde{\nabla} f_s(x_s))^{\mathrm{T}} z,
    \end{align}
    when $\epsilon \geq 0$, $z^{\mathrm{T}} A_t z$ is non-negative, namely, $A_t$ are positive semi-definite matrices. Set $\eta \geq 0, \epsilon \geq 0$, the last term in the right hand side of Inequality (\ref{QONSFT}) is negative, can be discarded. For the first term in the right hand side of Inequality (\ref{QONSFT}), we have
    \begin{align}
        \label{QONSGDB}
        \sum_{t=1}^T \frac{1}{2\eta} \widetilde{\nabla} f_t(x_t)^{\mathrm{T}} A^{-1}_t \widetilde{\nabla} f_t(x_t) & =  \frac{1}{2\eta} \sum_{t=1}^T \Tr\left(A^{-1}_t \widetilde{\nabla} f_t(x_t) \widetilde{\nabla} f_t(x_t)^{\mathrm{T}}\right) \nonumber \\
        & = \frac{1}{2\eta} \sum_{t=1}^T \Tr\left(A^{-1}_t (A_t-A_{t-1})\right) \nonumber \\
        & = \frac{1}{2\eta} \sum_{t=1}^T \Tr\left(I-A^{-0.5}_t A_{t-1} A^{-0.5}_t\right) \nonumber \\
        & = \frac{1}{2\eta} \sum_{t=1}^T \sum_{s=1}^n \left(1 - \lambda_s\left( A_t^{-0.5} A_{t-1} A_t^{-0.5} \right) \right) \nonumber \\
        & \leq - \frac{1}{2\eta} \sum_{t=1}^T \sum_{s=1}^n \log \left( \lambda_s\left( A_t^{-0.5} A_{t-1} A_t^{-0.5} \right) \right) \nonumber \\
        & = - \frac{1}{2\eta} \sum_{t=1}^T \log \left( \Pi_{s=1}^n \lambda_s\left( A_t^{-0.5} A_{t-1} A_t^{-0.5} \right) \right) \nonumber \\
        & = - \frac{1}{2\eta} \sum_{t=1}^T \log \left( \det  \left( A_t^{-0.5} A_{t-1} A_t^{-0.5} \right) \right) \nonumber \\
        & = - \frac{1}{2\eta} \sum_{t=1}^T \log \left( \det  \left( A_t^{-0.5} \right) \det \left( A_{t-1} \right) \det \left( A_t^{-0.5} \right) \right) \nonumber \\
        & = - \frac{1}{2\eta} \sum_{t=1}^T \log \left( \det  \left( A_t^{-1} \right) \det \left( A_{t-1} \right) \right) \nonumber \\
        & = - \frac{1}{2\eta} \log \left( \Pi_{t=1}^T \det  \left( A_t^{-1} \right) \det \left( A_{t-1} \right) \right) \nonumber \\
        & = - \frac{1}{2\eta} \log \left( \det  \left( A_T^{-1} \right) \det \left( A_{0} \right) \right) \nonumber \\
        & = \frac{1}{2\eta} \log \left( \det  \left( A_T \right) \det \left( A_{0}^{-1} \right) \right) \nonumber \\
        & = \frac{1}{2\eta} \log \left( \det  \left( \left( \epsilon I + \sum_{t=1}^T \widetilde{\nabla} f_t(x_t)(\widetilde{\nabla} f_t(x_t))^{\mathrm{T}} \right) (\epsilon I)^{-1} \right) \right) \nonumber \\
        & \leq \frac{1}{2\eta} \log \left(  \left( \epsilon + \sum_{t=1}^T \norm{\widetilde{\nabla} f_t(x_t)}^2_2  \right) \epsilon^{-1} \right)^n \nonumber \\
        & = \frac{n}{2\eta} \log \left(  \left( \epsilon + \sum_{t=1}^T \norm{\widetilde{\nabla} f_t(x_t)-\nabla f_t(x_t)+\nabla f_t(x_t)}^2_2  \right) \epsilon^{-1} \right) \nonumber \\
        & \leq \frac{n}{2\eta} \log \left(  \left( \epsilon + \sum_{t=1}^T \left(\norm{\widetilde{\nabla} f_t(x_t)-\nabla f_t(x_t)}_2 +\norm{\nabla f_t(x_t)}_2 \right)^2  \right) \epsilon^{-1} \right) \nonumber \\
        & \leq \frac{n}{2\eta} \log \left(  \left( \epsilon + \sum_{t=1}^T \left(\norm{\widetilde{\nabla} f_t(x_t)-\nabla f_t(x_t)}_1 + G \right)^2  \right) \epsilon^{-1} \right) \nonumber \\
        & \leq \frac{n}{2\eta} \log \left(  \left( \epsilon + \sum_{t=1}^T \left(8 \pi n^3 (n/\rho +1) \beta r_t/\rho + G \right)^2  \right) \epsilon^{-1} \right),
    \end{align}
    where $\lambda_s(A)$ is the $s$-th eigenvalue of matrix $A$, $\det(A)$ is the determinant of matrix $A$. Summing Inequality (\ref{ine:T4_2}) from $t=1$ to $T$, and combine Inequality (\ref{QONSFT}) (\ref{QONSGDB}), we have
    \begin{align}
        \sum_{t=1}^T f_t(x_t)-f_t(x^*) & \leq   \frac{n}{2\eta} \log \left(  \left( \epsilon + \sum_{t=1}^T \left(8 \pi n^3 (n/\rho +1) \beta r_t/\rho + G \right)^2  \right) \epsilon^{-1} \right)    + \frac{\epsilon \eta D^2}{2}  \nonumber \\
        & \quad +\sum_{t=1}^T 9D \pi n^3 (n/\rho +1) \beta r_t/\rho + \sum_{t=1}^T \frac{4D ( \pi n^3 (n/\rho +1) \beta r_t/\rho )^2}{G}.
    \end{align}
    Set $\eta=\min\left\{\frac{1}{2GD},\frac{\alpha}{2}\right\}, \epsilon=\frac{1}{\eta^2 D^2}, \left\{r_t=\frac{\rho G}{\pi n^3 (n/\rho +1) \beta t}\right\}_{t=1}^T$, we have
    \begin{align}
        \sum_{t=1}^T f_t(x_t)-f_t(x^*) & \leq \frac{n}{2\eta} \log \left( 1 + \sum_{t=1}^T \left( \frac{8G}{t} + G \right)^2 \eta^2 D^2    \right)    + \frac{1}{2\eta} +\sum_{t=1}^T \frac{9D G}{t} +\sum_{t=1}^T \frac{4D G^2}{t^2 G} \nonumber \\
        & \leq \frac{n}{2\eta} \log \left( 1 + \frac{1}{64} \sum_{t=1}^T \left(\frac{8}{t} + 1 \right)^2  \right)    + \frac{1}{2\eta} +\sum_{t=1}^T \frac{9D G}{t}+\sum_{t=1}^T \frac{4D G}{t} \nonumber \\
        & \leq \frac{n}{2\eta} \log \left( 1 +  \sum_{t=1}^T \left(\frac{1}{t^2} + \frac{1}{4t} + 1 \right)  \right)    + \frac{1}{2\eta} +\sum_{t=1}^T \frac{13D G}{t} \nonumber \\
        & \leq \frac{n}{2\eta} \log \left( 1 + \frac{5}{4} \log T + T  \right)    + \frac{1}{2\eta} + 13D G \log T \nonumber \\
        & \leq \left(DG+\frac{1}{\alpha}\right)n \log \left( 1 + \frac{5}{4} \log T + T  \right)    + \left(DG+\frac{1}{\alpha}\right) + 13D G \log T \nonumber \\
        & = O\left(\left(DG+\frac{1}{\alpha}\right)n\log(T+\log T)\right).
    \end{align}
    Thus, the theorem follows.
\end{proof}

\subsection{Proof of Theorem \ref{TNHQ}}
\label{proof:TNHQ}
\TNHQ*
\begin{proof}
By the assumption, let $x^* \in \arg\min_{x\in \mathcal{K}} \sum_{t=1}^{T}f_t(x)$, for the fixed $y=x^*$, for all $t\in [T]$, we have
\begin{align}
    \label{ONA}
    f_t(x_t)-f_t(x^*) \leq \nabla f_t(x_t)^{\mathrm{T}}(x_t-x^*)-\frac{\eta}{2}(x_t-x^*)^{\mathrm{T}} H(f_t)(x_t) (x_t-x^*).
\end{align}

By the update rule for $x_{t+1}$ and the Pythagorean theorem, there is
\begin{align}
        \label{ONSTUR2}
        \norm{x_{t+1}-x^*}^2_{A_t} & =\norm{\hat{P}^{(A_t)}_{\mathcal{K}}\left(x_t-\frac{1}{\eta_t} A_t^{-1} \nabla f_t(x_t)\right)-x^*}^2_{A_t} \nonumber \\
        & \leq \norm{x_t-\frac{1}{\eta_t} A_t^{-1} \nabla f_t(x_t)-x^*}^2_{A_t} \nonumber \\
        & \leq (x_t-x^*)^{\mathrm{T}}A_t (x_t-x^*)-\frac{2}{\eta_t} \nabla f_t(x_t)^{\mathrm{T}} (x_t-x^*)+\frac{1}{\eta^2_t} \nabla f_t(x_t)^{\mathrm{T}} A^{-1}_t \nabla f_t(x_t).
\end{align}

Since $\norm{x_{t+1}-x^*}^2_{A_t}=(x_{t+1}-x^*)^{\mathrm{T}}A_t (x_{t+1}-x^*)$, combine with Inequality (\ref{ONSTUR2}), we have
\begin{align}
    \label{ONSFB2}   
    \nabla f_t(x_t)^{\mathrm{T}}(x_t-x^*) \leq & \frac{1}{2\eta_t} \nabla f_t(x_t)^{\mathrm{T}} A^{-1}_t \nabla f_t(x_t) + \frac{\eta_t}{2}(x_t-x^*)^{\mathrm{T}}A_t (x_t-x^*) \nonumber \\
    & -\frac{\eta_t}{2}(x_{t+1}-x^*)^{\mathrm{T}}A_t (x_{t+1}-x^*).
\end{align}
summing Inequality (\ref{ONSFB2}) from $t=1$ to $T$, we have
\begin{align}
    \sum_{t=1}^T \nabla f_t(x_t)^{\mathrm{T}}(x_t-x^*) & \leq \sum_{t=1}^T \frac{1}{2\eta_t}\nabla f_t(x_t)^{\mathrm{T}} A^{-1}_t \nabla f_t(x_t) + \sum_{t=1}^T\frac{\eta_t}{2}(x_t-x^*)^{\mathrm{T}}A_t (x_t-x^*) \nonumber \\
    & \quad -\sum_{t=1}^T \frac{\eta_t}{2}(x_{t+1}-x^*)^{\mathrm{T}}A_t (x_{t+1}-x^*) \nonumber \\
    & = \sum_{t=1}^T \frac{1}{2\eta_t}\nabla f_t(x_t)^{\mathrm{T}} A^{-1}_t \nabla f_t(x_t) + \frac{\eta_1}{2}(x_1-x^*)^{\mathrm{T}}A_1 (x_1-x^*) \nonumber \\
    & \quad + \frac{1}{2} \sum_{t=2}^T(x_t-x^*)^{\mathrm{T}}(\eta_t A_t - \eta_{t-1} A_{t-1}) (x_t-x^*) \nonumber \\
    & \quad - \frac{\eta_T}{2} (x_{T+1}-x^*)^{\mathrm{T}}A_T (x_{T+1}-x^*).
\end{align}
Set $\eta_1=\eta_2=\dots=\eta_T=\eta$, we have
\begin{align}
    \label{ONSFT}
    \sum_{t=1}^T \nabla f_t(x_t)^{\mathrm{T}}(x_t-x^*) & \leq \sum_{t=1}^T \frac{1}{2\eta}\nabla f_t(x_t)^{\mathrm{T}} A^{-1}_t \nabla f_t(x_t) + \frac{\eta}{2}(x_1-x^*)^{\mathrm{T}}A_1 (x_1-x^*) \nonumber \\
        & \quad + \frac{\eta}{2} \sum_{t=2}^T(x_t-x^*)^{\mathrm{T}}( A_t -  A_{t-1}) (x_t-x^*) \nonumber \\
        & \quad - \frac{\eta}{2} (x_{T+1}-x^*)^{\mathrm{T}}A_T (x_{T+1}-x^*) \nonumber \\
        & = \sum_{t=1}^T \frac{1}{2\eta}\nabla f_t(x_t)^{\mathrm{T}} A^{-1}_t \nabla f_t(x_t) + \frac{\eta}{2}(x_1-x^*)^{\mathrm{T}}(A_1-H_1) (x_1-x^*) \nonumber \\
        & \quad + \frac{\eta}{2} \sum_{t=1}^T(x_t-x^*)^{\mathrm{T}} H_t (x_t-x^*) - \frac{\eta}{2} (x_{T+1}-x^*)^{\mathrm{T}}A_T (x_{T+1}-x^*) \nonumber \\
        & \leq \sum_{t=1}^T \frac{1}{2\eta}\nabla f_t(x_t)^{\mathrm{T}} A^{-1}_t \nabla f_t(x_t) + \frac{\epsilon \eta D^2}{2} + \frac{\eta}{2} \sum_{t=1}^T(x_t-x^*)^{\mathrm{T}} H_t (x_t-x^*).
\end{align}
For the first term in the right hand side of Inequality (\ref{ONSFT}), we have
\begin{align}
    \label{ONSGDB}
    \sum_{t=1}^T \frac{1}{2\eta} \nabla f_t(x_t)^{\mathrm{T}} A^{-1}_t \nabla f_t(x_t) & =  \frac{1}{2\eta} \sum_{t=1}^T \Tr\left(A^{-1}_t \nabla f_t(x_t) \nabla f_t(x_t)^{\mathrm{T}}\right) \nonumber \\
    & \leq \frac{1}{2 \alpha \eta} \sum_{t=1}^T \Tr\left(A^{-1}_t H_t\right) \nonumber \\
    & = \frac{1}{2 \alpha \eta} \sum_{t=1}^T \Tr\left(A^{-1}_t (A_t-A_{t-1})\right) \nonumber \\
    & \leq \frac{1}{2\eta} \log \left( \det  \left( A_T \right) \det \left( A_{0}^{-1} \right) \right) \nonumber \\
    & = \frac{1}{2 \alpha \eta} \log \left( \det  \left( \left( \epsilon I + \sum_{t=1}^T H_t \right) (\epsilon I)^{-1} \right) \right) \nonumber \\
    & \leq \frac{n}{2 \alpha \eta} \log \left(  \left( \epsilon + L T \right) \epsilon^{-1}\right).
\end{align}
set $\epsilon=L, \eta=\frac{1}{2DL}$. Summing Inequality (\ref{ONA}) from $t=1$ to $T$, and combine Inequality (\ref{ONSFT}) (\ref{ONSGDB}), we have
\begin{align}
    \sum_{t=1}^{T}(f_t(x_t)-f_t(x^*)) 
    & \leq \frac{n}{2 \alpha \eta} \log \left(  \left( \epsilon + L T \right) \epsilon^{-1}\right) + \frac{\epsilon \eta D^2}{2} \nonumber \\
    &\leq O\left(\frac{DLn}{\alpha}\log(T+1)\right).
\end{align}
It therefore gives the theorem.
\end{proof}

\subsection{Proof of Lemma \ref{QNSGBs}}
\label{proof:QNSGBs}
\QNSGBs*
\begin{proof}
    Note that we omit the subscript $t$ in the proofs of the lemmas as they hold for each timestep $t$. 
    By the property of $\alpha$-exp-concave
    \begin{align}
        \label{equ:expconcaveS}
        f(y) \geq f(z) +  \nabla f(z)^{\mathrm{T}}(y-z)+\frac{\eta}{2} (y-z)^{\mathrm{T}} \nabla f(z) \nabla f(z)^{\mathrm{T}} (y-z), 
    \end{align}
    For the first and the second term on the right hand side of Inequality (\ref{equ:expconcaveS}), 
    \begin{align}
         f(z)+\nabla f(z)^{\mathrm{T}}(y-z)  & =f(z)+\nabla f(z)^{\mathrm{T}}(y-z)+(\widetilde{\nabla} f(x)^{\mathrm{T}}(y-x) -\widetilde{\nabla} f(x)^{\mathrm{T}}(y-x))+(f(x)-f(x)) \nonumber \\
            & =f(x)+\widetilde{\nabla} f(x)^{\mathrm{T}}(y-x)+(\nabla f(z)-\widetilde{\nabla} f(x))^{\mathrm{T}}(y-x)  +(f(z)-f(x))+\nabla f(z)^{\mathrm{T}}(x-z) \nonumber \\
            & \geq f(x)+\widetilde{\nabla} f(x)^{\mathrm{T}}(y-x)-\|\nabla f(z)-\widetilde{\nabla} f(x)\|_1 \|y-x\|_{\infty}  -G\|z-x\|_2 - \|\nabla f(z)\|_2 \|x-z\|_2 \nonumber \\
            & \geq f(x)+\widetilde{\nabla} f(x)^{\mathrm{T}}(y-x)-\|\nabla f(z)-\widetilde{\nabla} f(x)\|_1 \|y-x\|_2  -2G\sqrt{n}r^{\prime}. \nonumber \\
            & \geq f(x)+\widetilde{\nabla} f(x)^{\mathrm{T}}(y-x)- D\|\nabla f(z)-\widetilde{\nabla} f(x)\|_1   -2G\sqrt{n}r^{\prime}.
    \end{align}
    
    Further, for the third term on the right hand side of Inequality (\ref{equ:expconcaveS}),
    \begin{align}
        & \frac{\eta}{2} (y-z)^{\mathrm{T}} \nabla f(z) \nabla f(z)^{\mathrm{T}} (y-z) \nonumber \\
        = & \frac{\eta}{2} (y-x)^{\mathrm{T}} \nabla f(z) \nabla f(z)^{\mathrm{T}} (y-x) + \frac{\eta}{2} (y-z)^{\mathrm{T}} \nabla f(z) \nabla f(z)^{\mathrm{T}} (y-z) \nonumber \\
        & - \frac{\eta}{2} (y-x)^{\mathrm{T}} \nabla f(z) \nabla f(z)^{\mathrm{T}} (y-x) \nonumber \\
        = & \frac{\eta}{2} (y-x)^{\mathrm{T}} \nabla f(z) \nabla f(z)^{\mathrm{T}} (y-x) + \frac{\eta}{2} (y-z)^{\mathrm{T}} \nabla f(z) \nabla f(z)^{\mathrm{T}} (y-z) \nonumber \\
        & - \frac{\eta}{2} (y-z)^{\mathrm{T}} \nabla f(z) \nabla f(z)^{\mathrm{T}} (y-x) + \frac{\eta}{2} (y-z)^{\mathrm{T}} \nabla f(z) \nabla f(z)^{\mathrm{T}} (y-x) \nonumber \\
        & - \frac{\eta}{2} (y-x)^{\mathrm{T}} \nabla f(z) \nabla f(z)^{\mathrm{T}} (y-x) \nonumber \\
        = & \frac{\eta}{2} (y-x)^{\mathrm{T}} \nabla f(z) \nabla f(z)^{\mathrm{T}} (y-x) + \frac{\eta}{2} (y-z)^{\mathrm{T}} \nabla f(z) \nabla f(z)^{\mathrm{T}} (x-z) \nonumber \\
        & + \frac{\eta}{2} (x-z)^{\mathrm{T}} \nabla f(z) \nabla f(z)^{\mathrm{T}} (y-x) \nonumber \\
        = & \frac{\eta}{2} (y-x)^{\mathrm{T}} \nabla f(z) \nabla f(z)^{\mathrm{T}} (y-x) + \frac{\eta}{2} (x-z)^{\mathrm{T}} \nabla f(z) \nabla f(z)^{\mathrm{T}} (x-z) \nonumber \\
        & + \frac{\eta}{2} (y-x)^{\mathrm{T}} \nabla f(z) \nabla f(z)^{\mathrm{T}} (x-z)  + \frac{\eta}{2} (x-z)^{\mathrm{T}} \nabla f(z) \nabla f(z)^{\mathrm{T}} (y-x) \nonumber \\
        \geq & \frac{\eta}{2} (y-x)^{\mathrm{T}} \nabla f(z) \nabla f(z)^{\mathrm{T}} (y-x) - \frac{\eta}{2} \norm{x-z}^2 \norm{\nabla f(z)}^2 - \eta \norm{x-z}\norm{\nabla f(z)}^2\norm{y-x} \nonumber \\
        \geq & \frac{\eta}{2} (y-x)^{\mathrm{T}} \nabla f(z) \nabla f(z)^{\mathrm{T}} (y-x) - \frac{\eta n r^{\prime 2} G^2}{2} - \eta \sqrt{n} r^{\prime} D G^2 \nonumber \\
        \geq & \frac{\eta}{2} (y-x)^{\mathrm{T}} \nabla f(z) \nabla f(z)^{\mathrm{T}} (y-x) - \frac{ n r^{\prime 2} G}{2} - \sqrt{n} r^{\prime} G.
    \end{align}
    We then handle the term $\frac{\eta}{2} (y-x)^{\mathrm{T}} \nabla f(z) \nabla f(z)^{\mathrm{T}} (y-x)$ with the same technique as shown in Inequality (\ref{equ:tvggv}), we have
    \begin{align}
        \frac{\eta}{2} (y-x)^{\mathrm{T}} \nabla f(z) \nabla f(z)^{\mathrm{T}} (y-x) \geq & \frac{\eta}{2} (y-x)^{\mathrm{T}} \widetilde{\nabla} f(z) \widetilde{\nabla} f(z)^{\mathrm{T}} (y-x) \nonumber \\
        & - \frac{D}{8} \norm{\nabla f(z) - \widetilde{\nabla} f(z)}_1 - \frac{D}{16G} \norm{\nabla f(z) - \widetilde{\nabla} f(z)}^2_1 \nonumber \\
        = & \frac{\eta}{2} (y-x)^{\mathrm{T}} \widetilde{\nabla} f(x) \widetilde{\nabla} f(x)^{\mathrm{T}} (y-x) \nonumber \\
        & - \frac{D}{8} \norm{\nabla f(z) - \widetilde{\nabla} f(z)}_1 - \frac{D}{16G} \norm{\nabla f(z) - \widetilde{\nabla} f(z)}^2_1.
    \end{align}
    
    In summary, we have
    \begin{align}
        f_t(y)\geq & f_t(x_t)+\widetilde{\nabla} f_t(x_t)^{\mathrm{T}}(y-x_t)+\frac{\eta_t}{2}(y-x_t)^{\mathrm{T}} \widetilde{\nabla} f_t(x_t) \widetilde{\nabla} f_t(x_t)^{\mathrm{T}}(y-x_t) \nonumber \\
        & - \frac{9D}{8} \norm{\nabla f(z_t) - \widetilde{\nabla} f(z_t)}_1 - \frac{D}{16G} \norm{\nabla f(z_t) - \widetilde{\nabla} f(z_t)}^2_1 -3G\sqrt{n}r^{\prime} - \frac{ n r^{\prime 2} G}{2}.
    \end{align}
    which gives the lemma.
\end{proof}

\subsection{Proof of Theorem \ref{TNQs}}
\label{proof:TNQs}
\TNQs*

\begin{proof}
    Similar to the proof of Theorem \ref{TNQ}, we have
    \begin{align}
        \sum_{t=1}^T f_t(x_t)-f_t(x^*) & \leq   \frac{n}{2\eta} \log \left(  \left( \epsilon + \sum_{t=1}^T \left(8 \pi n^3 (n/\rho +1) \beta r_t/\rho + G \right)^2  \right) \epsilon^{-1} \right)    + \frac{\epsilon \eta D^2}{2}  \nonumber \\
        & \quad +\sum_{t=1}^T 9D \pi n^3 (n/\rho +1) \beta r_t/\rho + \sum_{t=1}^T \frac{4D ( \pi n^3 (n/\rho +1) \beta r_t/\rho )^2}{G} \nonumber \\
        & \quad + \sum_{t=1}^T 3G\sqrt{n}r_t^{\prime} + \sum_{t=1}^T \frac{ n r_t^{\prime 2} G}{2}.
    \end{align}
    By Lemma \ref{Smooth}, we let $\beta=nG/p_t r^{\prime}_t$. Set $p_1=p_2=\dots=p_T=p$. 
    Then, set $\eta=\min\left\{\frac{1}{8GD},\frac{\alpha}{2}\right\}, \epsilon=\frac{1}{\eta^2 D^2}, \left\{r_t=\frac{\rho p}{\pi n^{9/2} (n/\rho +1) t^2}\right\}_{t=1}^T$, $\left\{r^{\prime}_t=\frac{1}{t\sqrt{n}}\right\}_{t=1}^T$ we have
    \begin{align}
        \sum_{t=1}^T f_t(x_t)-f_t(x^*) & \leq \frac{n}{2\eta} \log \left( 1 + \sum_{t=1}^T \left( \frac{8G}{t} + G \right)^2 \eta^2 D^2    \right)    + \frac{1}{2\eta} +\sum_{t=1}^T \frac{9D G}{t} +\sum_{t=1}^T \frac{4D G^2}{t^2 G} + \sum_{t=1}^T \frac{3G}{t} + \sum_{t=1}^T \frac{G}{2t^2} \nonumber \\
        & \leq \left(4DG+\frac{1}{\alpha}\right)n \log \left( 1 + \frac{5}{4} \log T + T  \right)    + \left(4DG+\frac{1}{\alpha}\right) + 13D G \log T + \frac{7}{2} G \log T \nonumber \\
        & = O\left(\left(DG+\frac{1}{\alpha}\right)n\log(T+\log T)\right).
    \end{align}
    Thus, the theorem follows.
\end{proof}

\end{document}